\newcommand{\ym}[1]{\textcolor{black}{#1}}
\def\BibTeX{{\rm B\kern-.05em{\sc i\kern-.025em b}\kern-.08em
		T\kern-.1667em\lower.7ex\hbox{E}\kern-.125emX}}
\newtheorem{theorem}{Theorem}[section]
\newtheorem{proposition}[theorem]{Proposition}
\newtheorem{remark}[theorem]{Remark}
\newenvironment{proof}[1][Proof]{\noindent\textbf{#1.} }{\ \rule{0.5em}{0.5em}}
\begin{document}
	
	\title{Distributed ADMM Approach for the Power Distribution Network Reconfiguration}
	
	\author{Yacine Mokhtari\thanks{LIAS (UR 20299), ISAE-ENSMA/Universit\'{e} de Poitiers, 2 rue Pierre Brousse, 86073 Poitiers Cedex 9, France. E-mail: yacine.mokhtari@ensma.fr, patrick.coirault@univ-poitiers.fr}, Patrick Coirault, Emmanuel Moulay\thanks{XLIM (UMR CNRS 7252), Universit\'{e} de Poitiers, 11 bd Marie et Pierre Curie, 86073 Poitiers Cedex 9, France. E-mail: emmanuel.moulay@univ-poitiers.fr}, Jerome Le Ny\thanks{Department of Electrical Engineering, Polytechnique Montr\'eal and GERAD, Montreal, QC H3T-1J4, Canada. E-mail: jerome.le-ny@polymtl.ca} and Didier Larraillet\thanks{SRD Energies, 78 Av. Jacques Coeur, 86000 Poitiers. E-mail: didier.larraillet@srd-energies.fr}}
	
	\date{}
	\maketitle
	
	\begin{abstract}
		The electrical network reconfiguration problem aims to minimize losses in a distribution system by adjusting switches while ensuring radial topology. The growing use of renewable energy and the complexity of managing modern power grids make solving the reconfiguration problem crucial. Distributed algorithms help optimize grid configurations, ensuring efficient adaptation to changing conditions and better utilization of renewable energy sources. This paper introduces a distributed algorithm designed to tackle the problem of power distribution network reconfiguration with a radiality constraint. This algorithm relies on ADMM (Alternating Direction Method of Multipliers), where each agent progressively updates its estimation based on the information exchanged with neighboring agents. We show that every agent is required to solve a linearly constrained convex quadratic programming problem and a Minimum Weight Rooted Arborescence Problem (MWRAP) with local weights during each iteration. Through numerical experiments, we demonstrate the performance of the proposed algorithm in various scenarios, including its application to a 33-bus test system and a real-world network.
	\end{abstract}
	
	\begin{IEEEkeywords}
		ADMM, minimum spanning tree, arborescence problem, graph theory, mixed-integer non-linear programming (MINLP), optimal network configuration, power loss minimization, radial distribution networks, distributed optimization.
	\end{IEEEkeywords}

	\section{Introduction}
	
	The Power Distribution Network Reconfiguration (PDNR) problem involves strategically modifying the network's topology by opening and closing switches. The primary aim is to optimize system performance by minimizing power losses while ensuring the network maintains its radial structure. Radial distribution systems are preferred due to their lower maintenance requirements and fewer potential failure points compared to alternative network topologies, as indicated in \cite{saboori2015reliability}.
	
	PDNR methods can be categorized into heuristic \cite{merlin1975search,civanlar1988distribution,baran1989network,ababei2010efficient}, meta-heuristic \cite{jeon2002efficient,wang2008optimization,rao2010optimal} and mathematical optimization-based approaches \cite{khodr2009distribution,taylor2012convex,jabr2012minimum,montoya2020mixed,mokhtari2025alternating}. Various methods were employed to address the PDNR problem using centralized methods \cite{mishra2017comprehensive}. These methods gather all network data and transmit it back to the control center to determine the reconfiguration solution. However, this approach is vulnerable to single-point failures and cyber-attacks. Furthermore, it can be computationally expensive and unsuitable for large-scale systems.
	
	\ym{In response to the limitations of centralized reconfiguration, recent efforts have focused on distributed (or decentralized) algorithms, which offer enhanced scalability, improved fault tolerance, and better adaptability to large-scale systems. In such frameworks, computing agents exchange only limited information with a selected subset of neighboring agents, reducing communication overhead and preserving data privacy. This decentralized structure not only enhances cybersecurity but also lowers the infrastructure costs associated with global coordination. Furthermore, distributed algorithms are inherently robust to individual agent failures, support the integration of renewable energy sources, and enable parallel computation, thereby offering faster convergence and improved scalability compared to centralized approaches~\cite{molzahn2017survey}}.
	
	\ym{In the context of the distributed approach for the PDNR problem, the multi-agent approach has been explored in several studies, including~\cite{nagata2002multi,elmitwally2015fuzzy,li2019full}. However, these approaches are not fully distributed, as they require some or all agents to access global network information, which raises concerns regarding data privacy and introduces vulnerability to single-point failures. Additionally, they often fail to address the need for scalable and efficient solutions suitable for large-scale distribution networks. An alternative method for solving the distributed PDNR problem is the Alternating Direction Method of Multipliers (ADMM), as proposed in~\cite{shen2019distributed,nejad2021enhancing,lopez2023enhanced}}.
	
	The ADMM algorithm is a primal-dual splitting method initially designed to solve convex optimization problems \cite{Boyd1,eckstein2015understanding}. ADMM as a general heuristic for solving non-convex problems is mentioned in \cite[Ch 9]{Boyd1} and this idea is further explored in \cite{boyd-biconvex,takapoui2020simple}. Recently, it has garnered attention as a method for discovering approximate solutions for NP-hard problems. Utilizing ADMM for mixed-integer problems offers a significant benefit by requiring only a method to compute the projection onto the discrete constraint set, either exactly or approximately if the computation is costly \cite{boyd-biconvex}. While there is no guarantee that ADMM-based methods converge to a global solution for non-convex problems, they are still useful in finding high-quality local solutions.
	
	\ym{Regarding the application of ADMM to PDNR problems, several distributed ADMM-based algorithms have been proposed in the literature~\cite{shen2019distributed,nejad2021enhancing,lopez2023enhanced}. In~\cite{shen2019distributed}, the authors introduced a distributed ADMM algorithm specifically tailored to the PDNR problem. Due to the inherent difficulty of directly handling binary decision variables within the ADMM framework, a relaxation strategy was adopted, as introduced in~\cite{boyd-biconvex}. In this work, the relaxation approach is presented as a general strategy for handling non-convex constraints within ADMM. However, when applied to PDNR, this method may lead to non-radial network topologies, thus rendering the obtained solution infeasible. The core issue arises in the projection step, which relies on a naive rounding of continuous values to binary decisions (0 or 1)~\cite{boyd-biconvex}, without explicitly enforcing the radiality constraint.}

	\ym{To address this issue, an enhanced ADMM-based algorithm was proposed in \cite{nejad2021enhancing}. This algorithm incorporates a proximal operator with a penalty parameter based on residual errors to handle binary variables. Nevertheless, there is still no guarantee of achieving a radial solution since the projection step remains on the binary set. In \cite{lopez2023enhanced}, additional conditions were introduced in the projection step to ensure radiality. The authors combined the algorithms from \cite{shen2019distributed} and \cite{nejad2021enhancing} with their own, selecting the optimal (feasible) solution among the three. Notably, the examination of radiality constraints on binary variables occurs only after the algorithm's convergence.}
	
	\ym{Recently, a centralized ADMM-based algorithm was proposed in~\cite{mokhtari2025alternating} to solve the PDNR problem. The projection subproblem is addressed exactly using two approaches: the relaxation technique, as used in prior works~\cite{shen2019distributed,nejad2021enhancing,lopez2023enhanced}, and a novel variable substitution technique introduced in~\cite{mokhtari2025alternating}. Both methods successfully solve the projection step by reformulating it as a Minimum Weight Rooted Arborescence Problem (MWRAP) \cite{korte2018combinatorial}. However, the substitution technique demonstrated superior performance and favorable properties compared to relaxation. Despite these advances, the proposed centralized algorithm has not yet been extended to the distributed setting.}
	
	\emph{Contributions:} \ym{This paper introduces a novel distributed algorithm for solving the PDNR problem, with a particular emphasis on preserving radiality throughout the optimization process. It extends the centralized framework proposed in~\cite{mokhtari2025alternating} to a fully decentralized setting, where agents interact solely through local communication with their neighbors. The algorithm is built upon the ADMM framework and leverages a variable substitution technique, initially introduced in~\cite{mokhtari2025alternating}, which demonstrates superior performance compared to the relaxation-based strategies commonly used in earlier works~\cite{shen2019distributed,nejad2021enhancing,lopez2023enhanced}. Radiality is rigorously enforced at each iteration by reformulating the projection step as a MWRAP with local weights, thereby ensuring the feasibility of the resulting network configuration. Furthermore, we introduce an enhanced version of the relaxation-based algorithms proposed in~\cite{shen2019distributed,nejad2021enhancing,lopez2023enhanced}, in which the projection step is solved exactly via a MWRAP, ensuring radiality at each iteration. A comparative analysis on the standard 33-bus distribution test system demonstrates that the proposed variable substitution method consistently outperforms the enhanced relaxation-based approach. This performance advantage, previously established in the centralized setting~\cite{mokhtari2025alternating}, is now shown to extend to the distributed framework as well.}

	The article is structured as follows. The proposed mathematical model is presented in Section~\ref{Sec:mod}. Section~\ref{Sec:algo} describes the optimization algorithm, and Section~\ref{Sec:num} presents numerical simulation results.
	
	\section{The Optimization Problem for the Simplified DistFlow Model}\label{Sec:mod}
	
	The structure of the power distribution network can be depicted as a graph containing nodes and connections known as edges (or arcs if a direction is considered). Within this graph, the edges symbolize the electrical lines responsible for carrying power, and the resistivity properties of these lines result in electrical losses, contributing to a specific cost per line. The grid consists of a generator, which holds all the power; source nodes, also known as substations or feeders, which do not hold any power but simply transmit energy to consumer nodes or buses; renewable energy nodes capable of both producing and consuming power; and load buses dedicated solely to power consumption. It is essential to highlight that substations do not receive power from any other nodes except the generator. In our context, the nodes can be simple buses or clusters. We assume that every node in a grid is equipped with a processor capable of making computations as an agent. We also assume that all the lines are switchable.
	
	Let us now introduce some notations that will be useful later. Consider a finite directed strongly connected graph $G(\mathcal{V},\mathcal{A})$, where $\mathcal{V}$ is the set of nodes and $\mathcal{A}\subset \mathcal{V}\times \mathcal{V}$ is the set of arcs. $|\mathcal{V}|$ and $|\mathcal{A}|$ denote the cardinal of $\mathcal{V}$ and $\mathcal{A}$ respectively. An arc going from node $i$ to node $j$ is denoted by the ordered pair $(i,j)$. We assume that each edge in $G$ is bidirectional, i.e., $(j,i) \in \mathcal{A}$ if $(i,j) \in \mathcal{A}$. The generator node is indexed by $g$. Let us denote by $\mathcal{R}\subset \mathcal{V}$ the set of source nodes, assumed connected to $g$, and by $\mathcal{B}:=\mathcal{V}\backslash \left( \mathcal{R}\cup \{g\}\right) $ the set of buses. The Hadamard product, also known as the element-wise product, of the vectors $\boldsymbol{x}\in\mathbb{R}^n$ and $\boldsymbol{y}\in\mathbb{R}^n$ is defined by:
	\[
	\boldsymbol{x}\odot \boldsymbol{y}:=\left(x_1y_1,\ldots,x_ny_n\right)^T.
	\]
	Bold symbols consistently represent vectors or matrices. If $\boldsymbol{x},\boldsymbol{y},\boldsymbol{z}\in\mathbb{R}^n$ then $\boldsymbol{z}\in[\boldsymbol{x}, \boldsymbol{y}]$ means that $z_i\in\left[x_i, y_i\right]$ for all $i\in\{1,\ldots,n\}$. Moreover, $[\boldsymbol{x}]_{e}$ denotes the $e$-th element of the vector $\boldsymbol{x}$. The vectors $\boldsymbol{0}_n$ and $\boldsymbol{1}_n$ denote vectors of size $n$ consisting entirely of zeros and ones, respectively.
	
	Here, we model EDS by using the \emph{simplified DistFlow model} (SDM) introduced in \cite{baran1989network}, see also \cite[Section III.A]{taylor2012convex} for more details. The formula expressing power loss on the arc connecting nodes $i$ and $j$ can be stated as follows \cite{baran1989network}:
	\begin{equation}
		\text{loss}_{ij}=\frac{r_{ij}\left( P_{ij}^{2}+Q_{ij}^{2}\right) }{V_{i}^{2}}. \label{loss}
	\end{equation}
	In this equation, $r_{ij}>0$, $P_{ij}\geq0$ and $Q_{ij}\geq0$ correspond to the resistance, active, and reactive power flow through the connection $(i,j)$. $V_{i}$ denotes the voltage magnitude at the node $i$. We assume that the voltage variations along the network must stay within strict regulatory bounds:
	\begin{equation}
		\left(1-\epsilon\right)^2 \leq \frac{V_i^2}{V_0^2} \leq \left(1+\epsilon\right)^2,\quad \forall i\in\mathcal{V},\label{interval}
	\end{equation}
	where $V_{0}$ represents the reference base voltage. In general, we have $\epsilon \approx 0.05$, see for instance \cite{baran1989network}.
	
	According to \cite{baran1989network,taylor2012convex}, for each $(i, j) \in \mathcal{A}$, the following relation between $V_{i}^2$ and $V_{j}^2$ holds:
	\begin{equation}
		V_{j}^2=V_{i}^2-2\left(r_{ij}P_{ij}+x_{ij}Q_{ij}\right)+\left(r_{ij}^2+x_{ij}^2 \right)\frac{P_{ij}^2+Q_{ij}^2}{V_{i}^2}, \label{voltage}
	\end{equation}
	where $x_{ij}>0$ is the reactance of the line $(i,j)$. We denote:
	\[
	U_i:=\frac{V_i^2}{V_0^2}, \quad i\in \mathcal{V},
	\]
	and for the sake of clarity we keep the notation ${r_{ij}}$ and ${x_{ij}}$ for $\frac{r_{ij}}{V_0^2}$ and $\frac{x_{ij}}{V_0^2}$, respectively. In addition to the above constraints, the forward updates in the buses must be satisfied for each bus $i\in \mathcal{V}$:
	\begin{equation}\label{div1}
		\begin{array}{lll}
			\operatorname{div}(\boldsymbol{P})_{i} &:= &\sum\limits_{j:(i,j)\in \mathcal{A}} P_{ij} - \sum\limits_{j:(j,i)\in \mathcal{A}} P_{ji} \\
			&=& \rho _{i}^{1} - \sum\limits_{j:(i,j)\in \mathcal{A}} r_{ij} \frac{P_{ij}^{2} + Q_{ij}^{2}}{V_{i}^{2}},
		\end{array}
	\end{equation}
	and
	\begin{equation}
		\begin{array}{lll}
			\operatorname{div}(\boldsymbol{Q})_{i} &:= &\sum\limits_{j:(i,j)\in \mathcal{A}} Q_{ij} - \sum\limits_{j:(j,i)\in \mathcal{A}} Q_{ji} \label{div2} \\
			&=& \rho_{i}^{2} - \sum\limits_{j:(i,j)\in \mathcal{A}} x_{ij} \frac{P_{ij}^{2} + Q_{ij}^{2}}{V_{i}^{2}},
		\end{array}
	\end{equation}
	where $\rho_i^1\leq0$ represents the active power and $\rho_i^2\leq0$ represents the reactive power injected at node $i\in \mathcal{B}$, for $k=1,2$. In the following, we assume that the demand matches the production within the network:  
	$$
	\sum_{i \in \mathcal{B}} \rho_i^k = \rho_g^k, \quad k=1,2.
	$$
	Additionally, we assume that the power flows $P_{ij}$ and $Q_{ij}$ should not exceed the capacities $\bar{P}_{ij}$ and $\bar{Q}_{ij}$ of the line, which leads to:	
	$$
	0 \leq P_{ij} \leq \bar{P}_{ij}, \qquad 0 \leq Q_{ij} \leq \bar{Q}_{ij}.
	$$

	As pointed out in \cite{baran1989network}, in the SDM approximation the quadratic terms in \eqref{voltage}, \eqref{div1}, and \eqref{div2} can be dropped since they represent the losses on the arcs, and hence they are much smaller than the branch power terms $\boldsymbol{P}\in\mathbb{R}^{|\mathcal{A}|}$ and $\boldsymbol{Q}\in\mathbb{R}^{|\mathcal{A}|}$. Furthermore, thanks to \eqref{interval} and for the sake of convexifying the problem~\eqref{loss}, the objective can be approximated at each arc $(i,j)$ as:
	\begin{equation*}
		\text{loss}_{ij} \approx r_{ij}\left(
		P_{ij}^{2}+Q_{ij}^{2}\right) . \label{loss2}
	\end{equation*} 
	Now, we introduce binary decision variables $b_{ij} \in \{0,1\}$ for each arc $(i,j) \in \mathcal{A}$. Setting $b_{ij} = 1$ indicates power flow from $i$ to $j$, while $b_{ij} = 0$ indicates no power flow. For a line $(i,j)$ to remain open, power must flow in only one direction, so only one of $b_{ij}$ or $b_{ji}$ can be equal to one. Conversely, setting both $b_{ij} = b_{ji} = 0$ effectively closes the line $(i,j)$. These variables $b_{ij}$ are also useful in formulating mathematical constraints that enforce a radial (or arborescence) topology rooted at $g$ for the network, as demonstrated in \cite{taylor2012convex, jabr2012minimum, wang2020radiality, lei2020radiality}. However, the explicit form of these constraints is not provided here, as it will not be used in the following discussion.
	
	The reconfiguration problem for the SDM can be written as follows:
	\begin{subequations}\label{DOT}
		\begin{align}
			&\underset{\left( {\boldsymbol{P},\boldsymbol{Q},\boldsymbol{U},\boldsymbol{b}}\right)\in \mathbb{X}\times\mathbb{S} }\min (\boldsymbol{r}\odot\boldsymbol{b})^T(\boldsymbol{P}\odot\boldsymbol{P}+\boldsymbol{Q}\odot\boldsymbol{Q}), \label{objective}\\
			&\text{subject to} \nonumber \\
			&\operatorname{div}(\boldsymbol{b}\odot \boldsymbol{P})= \boldsymbol{\rho }^{1}, \label{eq:constraint1} \\
			&\operatorname{div}(\boldsymbol{b}\odot \boldsymbol{Q})=\boldsymbol{\rho }^{2}, \label{eq:constraint2} \\
			&\boldsymbol{b}\odot \boldsymbol{AU}=2\boldsymbol{b}\odot \left( \boldsymbol{r}\odot \boldsymbol{P}+\boldsymbol{x}\odot \boldsymbol{Q}\right), \label{eq:constraint3}
		\end{align}
	\end{subequations}
	where the sets $\mathbb{X}$ and $\mathbb{S}$ are given by:
	\begin{align}
		& \mathbb{X}=\left[ \boldsymbol{0}_{|\mathcal{A}|},\bar{\boldsymbol{P}}
		\right] \times \left[ \boldsymbol{0}_{|\mathcal{A}|},\bar{\boldsymbol{Q}}
		\right] \times \left[ (1-\epsilon )^{2},(1+\epsilon )^{2}\right] ^{|\mathcal{%
				V}|},  \notag \\
		& \mathbb{S}=\left\{ 
		\begin{array}{c}
			\boldsymbol{b}\in \{0,1\}^{|\mathcal{A}|}:\ \boldsymbol{b}\ \text{forms an
				arborescence} \\ 
			\text{ rooted at }g 
		\end{array}
		\right\} ,  \notag
	\end{align}
	and the matrix $\boldsymbol{A}\in\mathbb{R}^{|\mathcal{A}|\times|\mathcal{V}|}$ is defined such that:
	\begin{equation*}
		[\boldsymbol{A}\boldsymbol{U}]_{e} = U_{i} - U_{j}, \quad \text{if } e = (i,j) \in \mathcal{A}.
	\end{equation*}
	Note that a feasible solution $\boldsymbol{b}$ to the above problem defines an arborescence $G_{\boldsymbol{b}} \subseteq G$ (a directed tree) rooted at $g$, whose arcs are the closed switches. The reader can refer to \cite[Subsection~2.2]{korte2018combinatorial} for a detailed definition of an arborescence. Throughout this paper, we assume the feasibility of problem~\eqref{DOT}, meaning that at least one optimal solution exists.
	
	\begin{remark}
		The voltage drop constraint~\eqref{voltage} has been multiplied component-wise with $\boldsymbol{b}$ in \eqref{DOT}, making it active solely when $b_{ij}=1$. This is different from the “big-$M$” method presented in \cite{taylor2012convex,nejad2019distributed} which includes or excludes these constraints using the inequality:
		\begin{equation*}
			\left\vert V_{j}^{2}-\left(V_{i}^{2}-2\left(r_{ij}P_{ij}+x_{ij}Q_{ij}\right) \right) \right\vert
			\leq (1-b_{ij})M,
		\end{equation*}
		for all $(i,j)\in \mathcal{A}$, where $M$ is a sufficiently large parameter. In our case, we have adopted the formulation \eqref{eq:constraint3} to represent the combinatorial problem as an MWRAP as it will be explained later. 
	\end{remark}
	
	\subsection*{Distribution of computation tasks among agents}
	
	First, we introduce the auxiliary variable $\boldsymbol{Y}$ and $\boldsymbol{Z}$ such that  
	\begin{equation}
		\boldsymbol{Y}=\boldsymbol{P}\odot \boldsymbol{b},\text{ \ }\boldsymbol{Z}= 
		\boldsymbol{Q}\odot \boldsymbol{b}.\label{substitution}
	\end{equation}
	The next step is to associate every agent $i\in \mathcal{V}$ with its objective function $f_{i}$ to minimize. We introduce the local variables $\left\{ \boldsymbol{X}^{i}\right\} _{i\in \mathcal{V}}$ defined by  
	\begin{equation}
		\boldsymbol{X}^{i}=\left( \boldsymbol{Y}^{i},\boldsymbol{Z}^{i},\boldsymbol{P}^{i},\boldsymbol{Q}^{i},\boldsymbol{U}^{i}\right) \in \mathbb{R}^{4| 
			\mathcal{A}|+|V|},  \label{X}
	\end{equation}
	such that 
	\begin{equation}
		\boldsymbol{X}^{i}=\boldsymbol{X}^{j},\quad\forall \left( i,j\right) \in 
		\mathcal{A}.  \label{equality}
	\end{equation}
	Constraint~\eqref{equality} ensures that all agents reach a consensus, meaning they converge to a common limit. We use the notation~\eqref{X} for the sake of brevity in presenting mathematical expressions and it can be omitted as needed. We associate for each $\boldsymbol{X}^{i}$, $i\in \mathcal{V}$, a local binary vector $\boldsymbol{b}^{i}$ such that  
	\begin{eqnarray}
		\boldsymbol{Y}^{i} &=&\boldsymbol{P}^{i}\odot \boldsymbol{b}^{i}, \label{C1} \\
		\boldsymbol{Z}^{i} &=&\boldsymbol{Q}^{i}\odot \boldsymbol{b}^{i}, \label{C2}
		\\
		\boldsymbol{b}^{i}\odot\boldsymbol{A}^{i}\boldsymbol{U}^{i} &=&2\left( 
		\boldsymbol{r}^i\odot\boldsymbol{P}^{i}+\boldsymbol{x}^i \odot \boldsymbol{Z}
		^{i}\right) ,  \label{C3}
	\end{eqnarray}
	where $\boldsymbol{r}^i,\boldsymbol{x}^i\in \mathbb{R}^{|\mathcal{A}|}$, are defined by
	\begin{equation*}
		\left[ \boldsymbol{r}^{i}\right] _{e}=\left\{ 
		\begin{array}{ll}
			\left[ \boldsymbol{r}\right] _{e} & \text{if }e=(i,j)\text{ \textrm{or} }e=(j,i), \\ 
			0, & \mathrm{otherwise},
		\end{array}
		\right. 
	\end{equation*}
	\begin{equation*}
		\left[ \boldsymbol{x}^{i}\right] _{e}=\left\{ 
		\begin{array}{ll}
			\left[ \boldsymbol{x}\right] _{e} & \text{if } e=(i,j)\text{ \textrm{or} }e=(j,i), \\ 
			0, & \mathrm{otherwise},%
		\end{array}%
		\right. 
	\end{equation*}
	and $\boldsymbol{A}^i\in\mathbb{R}^{|\mathcal{A}|\times|\mathcal{V}|}$, $i\in \mathcal{V}$, is defined such that
	\begin{equation*}
		\left[ \boldsymbol{A}^{i}\boldsymbol{U}^{i}\right] _{e}=\left\{ 
		\begin{array}{ll}
			U_{i}^{i}-U_{j}^{i}, & \text{if } e=(i,j), \\ 
			U_{j}^{i}-U_{i}^{i}, & \text{if } e=(j,i), \\ 
			0, & \mathrm{otherwise}.
		\end{array}
		\right. 
	\end{equation*}
	This setup ensures that each agent $i \in \mathcal{V}$ has access only to the components of $\boldsymbol{r}$ and $\boldsymbol{x}$ related to its neighbor. Additionally, we write the constraints~\eqref{eq:constraint2} and \eqref{eq:constraint3} in terms of their respective local variables for every agent  $i\in \mathcal{V}$ as  
	\begin{eqnarray}
		\operatorname{div}(\boldsymbol{Y}^{i})_{i} &=&\rho _{i}^{1},  \label{C5} \\
		\operatorname{div}(\boldsymbol{Z}^{i})_{i} &=&\rho _{i}^{2},  \label{C6}
	\end{eqnarray}
	which is equivalent to the formulation $\operatorname{div}(\boldsymbol{Y})=\boldsymbol{\rho }^{1}$ and $\operatorname{div}(\boldsymbol{Z})=\boldsymbol{\rho}^{2}$ thanks to the connectedness of the graph. 
	
	To distribute the objective function~\eqref{objective} among the agents, let us consider:
	\begin{eqnarray*}
		\sum_{(i,j)\in\mathcal{A}}\left(Y_{ij}^{2}+Z_{ij}^{2}\right) r_{ij}=\sum_{i\in \mathcal{V}}f_{i}(\boldsymbol{X}),
	\end{eqnarray*}
	where $f_{i}(\boldsymbol{X})$ is given by  
	\begin{equation*}
		f_{i}(\boldsymbol{X})=\sum_{j:(i,j)\in\mathcal{A}}\left(Y_{ij}^{2}+Z_{ij}^{2}\right) r_{ij}.
	\end{equation*}
	The distributed version of problem~(\ref{DOT}) writes  
	\begin{equation}
		\begin{array}{ll}
			\underset{\boldsymbol{X}^{i}\in\mathbb{X},\,\boldsymbol{b}^{i}\in 
				\mathbb{S}, \ \forall i \in \mathcal{V}}{\min}  \sum_{i\in
				\mathcal{V}}f_{i}(\boldsymbol{X} ^{i}), \\ 
			\text{\textrm{subject to}} \\
			\boldsymbol{X}^{i}\text{ satisfies \eqref{equality}--\eqref{C6},} \quad\forall i\in \mathcal{V}.
		\end{array}
		\label{consensus}
	\end{equation}
	
	\begin{remark}
		We could opt for a relaxation approach similar to that in \cite{shen2019distributed,nejad2021enhancing,lopez2023enhanced}. However, as it is shown in Section \ref{Sec:num}, employing the variable substitutions \eqref{substitution} results in superior algorithm performance compared to relaxation techniques.
	\end{remark}
	
	\begin{remark}\label{Remark} 
		It is important to note that problem~\eqref{consensus} is a relaxed version of problem~\eqref{DOT}, and they are not equivalent. The constraints \eqref{equality}--\eqref{C3} do not ensure consensus on the binary vectors, i.e., $\boldsymbol{b}^{i} = \boldsymbol{b}^{j}, \ \forall \left( i, j \right) \in \mathcal{A}.$  To show this, let us focus only on \eqref{C1} since the analysis for \eqref{C2} and \eqref{C3} is similar. We have for some $\left( i,j\right) \in \mathcal{A}$ from \eqref{C1}: $\boldsymbol{Y}^{i}=\boldsymbol{P}^{i}\odot \boldsymbol{b}^{i}$ and $\boldsymbol{Y}^{j}=\boldsymbol{P}^{j}\odot \boldsymbol{b}^{j}$ and from \eqref{equality} the consensus constraint $\boldsymbol{Y}^{i}=\boldsymbol{Y}^{j}$ and $\boldsymbol{P}^{i}=\boldsymbol{P}^{j}$. Combining the last two equalities yields $\boldsymbol{P}^{i}\odot (\boldsymbol{b}^{i}-\boldsymbol{b}^{j})=\boldsymbol{0}_{|\mathcal{A}|}$, which does not imply necessarily that $\boldsymbol{b}^{i}=\boldsymbol{b}^{j}$ when some components of the flow $\boldsymbol{P}^{i}$ are zeros. However, from an intuitive standpoint, if there is no flow passing through a particular $\left( i,j\right) \in \mathcal{A}$ and all the $P_{ij}^{k}=0$ for every $k\in V$, then the switch $\left( i,j\right) $ would be considered closed. Consequently, we have $b_{ij}^{k}=b_{ij}^{l}$ for all $k,l\in V$. This simplification is aimed at making the combinatorial sub-problem more manageable. As demonstrated in Section \ref{Sec:num}, this simplification has no impact on the algorithm's performance. 
	\end{remark}
	
	\section{The optimization algorithm}\label{Sec:algo}
	
	\subsection{Review of the MWRAP}
	
	Given a directed graph $G = (\mathcal{V}, \mathcal{A})$ and a vector of weights 
	$\left\lbrace h_{ij}\right\rbrace_{{(i,j)} \in \mathcal{A}}$, the MWRAP consists in finding an arborescence $(\mathcal{V},\mathcal{T})$, with $\mathcal{T} \subset \mathcal{A}$, rooted at a designated node in $G$, that minimizes the sum of weights $\sum_{(i,j)\in \mathcal{T}} h_{ij}$. For more details, we refer the reader to \cite[Chapter~6.2]{korte2018combinatorial}. By definition of the set $\mathbb{S}$, the MWRAP can also be formulated as an integer linear programming problem as the following
	\begin{equation}
		\underset{\boldsymbol{b}\in\mathbb{S}}{\arg\min}\;\boldsymbol{h}^T\boldsymbol{b},
		\label{MST}
	\end{equation}  
	Solving \eqref{MST} as a linear integer problem in every iteration reduces the effectiveness of ADMM as a heuristic approach to our problem.  This is attributed to the prolonged convergence time, particularly evident in large-scale networks. In practical scenarios, the MWRAP can be effectively solved by algorithms like Edmond's algorithm, with a runtime of $O(|\mathcal{A}|+|\mathcal{V}|\log(|\mathcal{V}|))$, as shown in \cite[Chapter~6.3]{korte2018combinatorial}.

	\subsection{The proposed algorithm}
	
	Let us introduce the auxiliary variable $\boldsymbol{k} \in \mathbb{R}^{4|\mathcal{A}|+|V|}$ such that
	\begin{equation*}
		\boldsymbol{X}^{i} = \boldsymbol{X}^{j} = \boldsymbol{k}^{ij}, \quad \forall \left( i, j \right) \in \mathcal{A}.
	\end{equation*}
	By considering these new constraints, the augmented Lagrangian associated with problem~\eqref{consensus} reads
	\begin{align*}
		&\mathcal{L}_{\delta} \left( 
		\begin{array}{c}
			\{\boldsymbol{X}^{i}\}, \{\boldsymbol{b}^{i}\}, \{\boldsymbol{k}^{ij}\}, \left\{ \boldsymbol{\alpha}^{i} \right\}, \\ 
			\left\{ \boldsymbol{\beta}^{i} \right\}, \{\boldsymbol{\gamma}^{i}\}, \{\boldsymbol{s}^{ij}\}, \{\boldsymbol{g}^{ij}\}%
		\end{array}
		\right) = \sum_{i \in \mathcal{V}} f_{i} (\boldsymbol{X}^{i}) \\
		&+ \frac{\delta}{2} \sum_{i \in \mathcal{V}} H_i (\boldsymbol{X}^{i}, \boldsymbol{b}^{i}, \delta^{-1} \boldsymbol{\alpha}^{i}, \delta^{-1} \boldsymbol{\beta}^{i}, \delta^{-1} \boldsymbol{\gamma}^{i})  \\
		&- 2 \delta^{-1} \sum_{i \in \mathcal{V}} \left\Vert \boldsymbol{\alpha}^{i} \right\Vert^{2}- 2 \delta^{-1} \sum_{i \in \mathcal{V}} \left\Vert \boldsymbol{\beta}^{i} \right\Vert^{2} - 2 \delta^{-1} \sum_{i \in \mathcal{V}} \left\Vert \boldsymbol{\gamma}^{i} \right\Vert^{2} \\
		&+ \sum_{(i,j) \in \mathcal{A}} \left( \boldsymbol{s}^{ij} \right)^{T} \left( \boldsymbol{X}^{i} - \boldsymbol{k}^{ij} \right) + \sum_{(i,j) \in \mathcal{A}} \left( \boldsymbol{g}^{ij} \right)^{T} \left( \boldsymbol{X}^{j} - \boldsymbol{k}^{ij} \right) \\
		&+ \frac{\delta}{2} \sum_{(i,j) \in \mathcal{A}} \left\Vert \boldsymbol{X}^{i} - \boldsymbol{k}^{ij} \right\Vert^{2} + \frac{\delta}{2} \sum_{(i,j) \in \mathcal{A}} \left\Vert \boldsymbol{X}^{j} - \boldsymbol{k}^{ij} \right\Vert^{2}
	\end{align*}
	where $\left\{ \boldsymbol{\alpha}^{i}\right\}_{i\in\mathcal{V}}$, $\left\{ \boldsymbol{\beta}^{i} \right\}_{i\in\mathcal{V}}$, $\left\{\boldsymbol{\gamma}^{i} \right\}_{i\in\mathcal{V}}$, $\{\boldsymbol{s}^{ij}\}_{(i,j) \in\mathcal{A}}$, $\{\boldsymbol{g}^{ij}\}_{(i,j) \in \mathcal{A}}$, are the Lagrange multipliers and $H_i$ is given by   
	\begin{align}
		&H_i(\boldsymbol{X}^i,\boldsymbol{b}^i,\boldsymbol{\alpha }^i,\boldsymbol{\beta}^i, \boldsymbol{\gamma }^i)   =\frac{1}{2}\left\Vert \boldsymbol{P}^i\odot \boldsymbol{b}^i-\boldsymbol{Y}^i+ 
		\boldsymbol{\alpha}^i\right\Vert ^{2}  \notag \\
		&+\frac{1}{2}\left\Vert \boldsymbol{Q}^i\odot \boldsymbol{b}^i-\boldsymbol{Z}^i+ 
		\boldsymbol{\beta}^i\right\Vert^{2}  \label{H}\\
		&+\frac{1}{2}\left\Vert \boldsymbol{b}^i \odot\boldsymbol{A}^{i}\boldsymbol{U}^i
		-2\left( \boldsymbol{r}^i\odot \boldsymbol{Y}^i+\boldsymbol{x}^i\odot 
		\boldsymbol{Z}^i\right) +\boldsymbol{\gamma}^i\right\Vert ^{2},  \notag
	\end{align}
	The ADMM iterates are given by
	\begin{eqnarray*}
		\boldsymbol{X}_{k+1}^{i} &\in &\underset{\boldsymbol{X}^{i} \in \mathbb{X}}{\arg \min} \ \mathcal{L}_{\delta} \left( 
		\begin{array}{c}
			\{\boldsymbol{X}^{i}\}, \{\boldsymbol{k}_{k}^{ij}\}, \{\boldsymbol{b}_{k}^{i}\}, \left\{ \boldsymbol{\alpha}_{k}^{i} \right\}, \\ 
			\left\{ \boldsymbol{\beta}_{k}^{i} \right\}, \{\boldsymbol{\gamma}_{k}^{i}\}, \{\boldsymbol{s}_{k}^{ij}\}, \{\boldsymbol{g}_{k}^{ij}\}%
		\end{array}
		\right),
	\end{eqnarray*}
	\begin{eqnarray*}
		\boldsymbol{k}_{k+1}^{ij} &\in &\underset{\boldsymbol{k}^{ij}}{\arg \min} \ \mathcal{L}_{\delta} \left( 
		\begin{array}{c}
			\{\boldsymbol{X}_{k+1}^{i}\}, \{\boldsymbol{k}^{ij}\}, \{\boldsymbol{b}_{k}^{i}\}, \left\{ \boldsymbol{\alpha}_{k}^{i} \right\}, \\ 
			\left\{ \boldsymbol{\beta}_{k}^{i} \right\}, \{\boldsymbol{\gamma}_{k}^{i}\}, \{\boldsymbol{s}_{k}^{ij}\}, \{\boldsymbol{g}_{k}^{ij}\}
		\end{array}
		\right),
	\end{eqnarray*}
	\begin{eqnarray*}
		\boldsymbol{b}_{k+1}^{i} &\in &\underset{\boldsymbol{b}^{i} \in \mathbb{S}}{\arg \min} \ \mathcal{L}_{\delta} \left( 
		\begin{array}{c}
			\{\boldsymbol{X}_{k+1}^{i}\}, \{\boldsymbol{k}_{k+1}^{ij}\}, \{\boldsymbol{b}^{i}\}, \left\{ \boldsymbol{\alpha}_{k}^{i} \right\}, \\ 
			\left\{ \boldsymbol{\beta}_{k}^{i} \right\}, \{\boldsymbol{\gamma}_{k}^{i}\}, \{\boldsymbol{s}_{k}^{ij}\}, \{\boldsymbol{g}_{k}^{ij}\}
		\end{array}
		\right),
	\end{eqnarray*}
	\begin{eqnarray*}
		\boldsymbol{\alpha}_{k+1}^{i} &=& \boldsymbol{\alpha}_{k}^{i} + \delta \left( \boldsymbol{Y}_{k+1}^{i} - \boldsymbol{P}_{k+1}^{i} \odot \boldsymbol{b}_{k+1}^{i} \right), \\
		\boldsymbol{\beta}_{k+1}^{i} &=& \boldsymbol{\beta}_{k}^{i} + \delta \left( \boldsymbol{Z}_{k+1}^{i} - \boldsymbol{Q}_{k+1}^{i} \odot \boldsymbol{b}_{k+1}^{i} \right), \\
		\boldsymbol{\gamma}_{k+1}^{i} &=& \boldsymbol{\gamma}_{k}^{i} + \delta \left( 
		\begin{array}{c}
			\boldsymbol{b}_{k+1}^{i} \odot \boldsymbol{A}^{i} \boldsymbol{U}_{k+1}^{i} \\ 
			-2 \left( \boldsymbol{r}^i \odot \boldsymbol{Y}_{k+1}^{i} + \boldsymbol{x}^i \odot \boldsymbol{Z}_{k+1}^{i} \right)
		\end{array}
		\right), \\
		\boldsymbol{s}_{k+1}^{ij} &=& \boldsymbol{s}_{k}^{ij} + \delta \left( \boldsymbol{X}_{k+1}^{i} - \boldsymbol{k}_{k+1}^{ij} \right), \\
		\boldsymbol{g}_{k+1}^{ij} &=& \boldsymbol{g}_{k}^{ij} + \delta \left( \boldsymbol{X}_{k+1}^{j} - \boldsymbol{k}_{k+1}^{ij} \right).
	\end{eqnarray*}
	A closed form can be obtained for $\boldsymbol{k}_{k+1}^{ij}$. Indeed, since the variable $\boldsymbol{k}^{ij}$ is unconstrained, we solve $\nabla_{\boldsymbol{k}_{k+1}^{ij}}\mathcal{L}_{\delta}=\boldsymbol{0}_{|\mathcal{A}|}$ to get
	\begin{equation*}
		\boldsymbol{0}_{|\mathcal{A}|} = - \left( \boldsymbol{s}_{k}^{ij} + \boldsymbol{g}_{k}^{ij} \right) + 2 \delta \boldsymbol{k}_{k+1}^{ij} - \delta \boldsymbol{X}_{k+1}^{i} - \delta \boldsymbol{X}_{k+1}^{j},
	\end{equation*}
	that is
	\begin{equation}
		\boldsymbol{k}_{k+1}^{ij} = \frac{1}{2 \delta} \left( \boldsymbol{s}_{k}^{ij} + \boldsymbol{g}_{k}^{ij} \right) + \frac{1}{2} \left( \boldsymbol{X}_{k+1}^{i} + \boldsymbol{X}_{k+1}^{j} \right). \label{z}
	\end{equation}
	Inserting \eqref{z} in the dual updates formulas we get
	\begin{eqnarray*}
		\boldsymbol{s}_{k+1}^{ij} &=& \frac{1}{2} \left( \boldsymbol{s}_{k}^{ij} - \boldsymbol{g}_{k}^{ij} \right) + \frac{\delta}{2} \left( \boldsymbol{X}_{k+1}^{i} - \boldsymbol{X}_{k+1}^{j} \right), \\
		\boldsymbol{g}_{k+1}^{ij} &=& \frac{1}{2} \left( \boldsymbol{g}_{k}^{ij} - \boldsymbol{s}_{k}^{ij} \right) + \frac{\delta}{2} \left( \boldsymbol{X}_{k+1}^{j} - \boldsymbol{X}_{k+1}^{i} \right).
	\end{eqnarray*}
	By taking the sum of the above formulas we obtain $\boldsymbol{s}_{k}^{ij} + \boldsymbol{g}_{k}^{ij} = \boldsymbol{0}_{|\mathcal{A}|}$, $k \geq 1$, which implies that
	\begin{equation}
		\boldsymbol{k}_{k+1}^{ij} = \frac{1}{2} \left( \boldsymbol{X}_{k+1}^{i} + \boldsymbol{X}_{k+1}^{j} \right). \label{w}
	\end{equation}
	The dual updates become
	\begin{eqnarray*}
		\boldsymbol{s}_{k+1}^{ij} &=& \boldsymbol{s}_{k}^{ij} + \frac{\delta}{2} \left( \boldsymbol{X}_{k+1}^{i} - \boldsymbol{X}_{k+1}^{j} \right), \\
		\boldsymbol{g}_{k+1}^{ij} &=& \boldsymbol{g}_{k}^{ij} + \frac{\delta}{2} \left( \boldsymbol{X}_{k+1}^{j} - \boldsymbol{X}_{k+1}^{i} \right).
	\end{eqnarray*}
	Defining the new multiplier $\left\{ \boldsymbol{\lambda}_{i} \right\}_{i \in \mathcal{V}}$ such that
	\begin{equation}
		\boldsymbol{\lambda}_{k+1}^{i} = \sum_{j:(i,j) \in \mathcal{A}} \boldsymbol{s}_{k+1}^{ij} + \sum_{j:(j,i) \in \mathcal{A}} \boldsymbol{g}_{k+1}^{ji}, \quad i \in \mathcal{V}, \label{lambda}
	\end{equation}
	then
	\begin{equation*}
		\boldsymbol{\lambda}_{k+1}^{i} = \boldsymbol{\lambda}_{k}^{i} + \delta \sum_{j \in N(i)} \left( \boldsymbol{X}_{k+1}^{i} - \boldsymbol{X}_{k+1}^{j} \right), \quad i \in \mathcal{V}.
	\end{equation*}
	So, the iterate with respect to the variable $\boldsymbol{X}_{k+1}^{i}$ writes
	\begin{eqnarray}
		\boldsymbol{X}_{k+1}^{i}& =& \underset{\boldsymbol{X}^{i} \in \mathbb{X}}{\arg \min} \sum_{i \in V} f_{i} (\boldsymbol{X}^{i}) \label{piz} \\
		&&+ \frac{\delta}{2} H_i (\boldsymbol{X}^{i}, \boldsymbol{b}_{k}^{i}, \delta^{-1} \boldsymbol{\alpha}_{k}^{i}, \delta^{-1} \boldsymbol{\beta}_{k}^{i}, \delta^{-1} \boldsymbol{\gamma}_{k}^{i}) \notag \\
		&&+ (\boldsymbol{X}^{i})^T \boldsymbol{\lambda}_k^i + \delta \sum_{j \in N(i)} \left\Vert \boldsymbol{X}^{i} - \frac{\boldsymbol{X}_{k}^{i} + \boldsymbol{X}_{k}^{j}}{2} \right\Vert^{2}. \notag
	\end{eqnarray}
	Finally, by inserting \eqref{lambda} in \eqref{piz} and rescaling the dual variables, we get the following ADMM iterates to solve problem~\eqref{consensus} for each agent $i\in V$:
	\begin{subequations}
		\begin{eqnarray}
			\boldsymbol{X}_{k+1}^{i} &\in &\underset{\boldsymbol{X}^{i}\in \mathbb{X} }{\arg \min }\;\delta ^{-1}f_{i}\left( \boldsymbol{X}^{i}\right)+H_i(\boldsymbol{X}^{i},\boldsymbol{b}^{i}_k,\boldsymbol{\alpha}_{k}^{i}, 
			\boldsymbol{\beta }_{k}^{i},\boldsymbol{\gamma }_{k}^{i})  \notag \\
			&&+\left( \boldsymbol{\lambda }_{k}^{i}\right) ^{T}\boldsymbol{X}
			^{i}+\sum_{j\in N(i)}\left\Vert \boldsymbol{X}^{i}-\frac{ \boldsymbol{X}_{k}^{i}+\boldsymbol{X}_{k}^{j}}{2}\right\Vert ^{2}  \label{admm1} \\
			\boldsymbol{b}_{k+1}^{i} &\in &\underset{\boldsymbol{b}^{i}\in \mathbb{S}}{
				\arg \min }\ H_i(\boldsymbol{X}_{k+1}^{i},\boldsymbol{b}^{i},\boldsymbol{%
				\alpha }_{k}^{i},\boldsymbol{\beta }_{k}^{i},\boldsymbol{\gamma }_{k}^{i}),
			\label{admm2} \\
			\boldsymbol{\alpha }_{k+1}^{i} &=&\boldsymbol{\alpha }_{k}^{i}+\boldsymbol{P}
			_{k+1}^{i}\odot \boldsymbol{b}_{k+1}^{i}-\boldsymbol{Y}_{k+1}^{i},
			\label{admm3} \\
			\boldsymbol{\beta }_{k+1}^{i} &=&\boldsymbol{\beta }_{k}^{i}+\boldsymbol{Q}
			_{k+1}^{i}\odot \boldsymbol{b}_{k+1}^{i}-\boldsymbol{Z}_{k+1}^{i},
			\label{admm4} \\
			\boldsymbol{\gamma }_{k+1}^{i} &=&\boldsymbol{\gamma}_{k}^{i}+\boldsymbol{b}_{k+1}^{i}\odot\boldsymbol{A}^{i}\boldsymbol{U}_{k+1}^{i} \label{admm5} \\
			&& -2\left( \boldsymbol{r}^i\odot \boldsymbol{Y}_{k+1}^{i}+\boldsymbol{x}^i\odot\boldsymbol{Z}_{k+1}^{i}\right) ,  \notag \\
			\boldsymbol{\lambda }_{k+1}^{i} &=&\boldsymbol{\lambda }_{k}^{i}+\sum_{j\in N(i)}\left( \boldsymbol{X}_{k+1}^{i}-\boldsymbol{X}_{k+1}^{j}\right) ,
			\label{admm6}
		\end{eqnarray}
	\end{subequations}
	where $N(i)$ is the neighbors set of the agent $i$, and $\delta >0$ is a quadratic penalty parameter. Iterates \eqref{admm1} and \eqref{admm2} are known as the primal updates while iterates \eqref{admm3}--\eqref{admm6} are known as the dual ones. Note that the iterates \eqref{admm1}--\eqref{admm6} are  distributed, namely, each agent $i$ exclusively exchanges information with its neighboring agents.
	
	It can be seen that problem~\eqref{admm1} is a linearly constrained convex quadratic problem that can be solved by using optimization solvers, while problem~\eqref{admm2} for $\boldsymbol{b}^{i}$, $i\in \mathcal{V}$, is a mixed-integer quadratic problem that is generally costly to solve. In the  Next, we show that it can be seen as an MWRAP with specific local weights.
	
	\begin{proposition}\label{Proposition} 
		For each $i\in \mathcal{V}$, solving problem~\eqref{admm2} is equivalent to solve the MWRAP on $G$ rooted at $g$ with weights 
		\begin{eqnarray}
			\boldsymbol{h}^{i} &=&\boldsymbol{h}_{1}^{i}+\boldsymbol{h}_{2}^{i}+%
			\boldsymbol{h}_{3}^{i}  \label{weights} \\
			&=&\boldsymbol{P}_{k+1}^{i}\odot \left( \boldsymbol{P}_{k+1}^{i}+2\left( 
			\boldsymbol{\alpha }_{k}^{i}-\boldsymbol{Y}_{k+1}^{i}\right) \right)   \notag
			\\
			&&+\boldsymbol{Q}_{k+1}^{i}\odot \left( \boldsymbol{Q}_{k+1}^{i}+2\left( 
			\boldsymbol{\beta }_{k}^{i}-\boldsymbol{Z}_{k+1}^{i}\right) \right)   \notag
			\\
			&&+\boldsymbol{A}^{i}\boldsymbol{U}_{k+1}^{i}\odot \left( 
			\begin{array}{c}
				\boldsymbol{A}^{i}\boldsymbol{U}_{k+1}^{i}+2\boldsymbol{\gamma }_{k}^{i} \\ 
				-4\left( \boldsymbol{r}^{i}\odot \boldsymbol{Y}_{k+1}^{i}+\boldsymbol{x}%
				^{i}\odot \boldsymbol{Z}_{k+1}^{i}\right) 
			\end{array}
			\right).   \notag
		\end{eqnarray}
	\end{proposition}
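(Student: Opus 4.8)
The plan is to exploit the binary nature of the feasible set: since $\boldsymbol{b}^i \in \mathbb{S} \subset \{0,1\}^{|\mathcal{A}|}$, every component obeys $(b^i_e)^2 = b^i_e$, which collapses the quadratic-in-$\boldsymbol{b}^i$ function $H_i$ into an affine function of $\boldsymbol{b}^i$; the equivalence \eqref{MST} between the integer linear program over $\mathbb{S}$ and the combinatorial problem then closes the argument. Throughout, I would fix the agent $i$ and the iteration $k$ and abbreviate $\boldsymbol{P} := \boldsymbol{P}_{k+1}^i$, $\boldsymbol{Q} := \boldsymbol{Q}_{k+1}^i$, $\boldsymbol{Y} := \boldsymbol{Y}_{k+1}^i$, $\boldsymbol{Z} := \boldsymbol{Z}_{k+1}^i$, $\boldsymbol{v} := \boldsymbol{A}^i\boldsymbol{U}_{k+1}^i$, $\boldsymbol{\alpha} := \boldsymbol{\alpha}_k^i$, $\boldsymbol{\beta} := \boldsymbol{\beta}_k^i$, $\boldsymbol{\gamma} := \boldsymbol{\gamma}_k^i$, and $\boldsymbol{w} := 2(\boldsymbol{r}^i\odot\boldsymbol{Y} + \boldsymbol{x}^i\odot\boldsymbol{Z})$, so that by \eqref{H} the objective of \eqref{admm2} is
\[
H_i = \frac{1}{2}\left\Vert \boldsymbol{P}\odot\boldsymbol{b}^i + (\boldsymbol{\alpha}-\boldsymbol{Y}) \right\Vert^2 + \frac{1}{2}\left\Vert \boldsymbol{Q}\odot\boldsymbol{b}^i + (\boldsymbol{\beta}-\boldsymbol{Z}) \right\Vert^2 + \frac{1}{2}\left\Vert \boldsymbol{v}\odot\boldsymbol{b}^i + (\boldsymbol{\gamma}-\boldsymbol{w}) \right\Vert^2 .
\]

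Next I would expand each squared norm componentwise: for a generic term $\frac{1}{2}\left\Vert \boldsymbol{u}\odot\boldsymbol{b}^i + \boldsymbol{e}\right\Vert^2 = \frac{1}{2}\sum_{e\in\mathcal{A}}\left(u_e^2 (b^i_e)^2 + 2 u_e e_e b^i_e + e_e^2\right)$, the identity $(b^i_e)^2 = b^i_e$ linearizes the quadratic part, giving $\frac{1}{2}(\boldsymbol{u}\odot\boldsymbol{u} + 2\boldsymbol{u}\odot\boldsymbol{e})^T\boldsymbol{b}^i + \frac{1}{2}\left\Vert\boldsymbol{e}\right\Vert^2$. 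Applying this with $(\boldsymbol{u},\boldsymbol{e}) = (\boldsymbol{P},\boldsymbol{\alpha}-\boldsymbol{Y})$, then $(\boldsymbol{Q},\boldsymbol{\beta}-\boldsymbol{Z})$, then $(\boldsymbol{v},\boldsymbol{\gamma}-\boldsymbol{w})$, and summing yields $H_i = \frac{1}{2}(\boldsymbol{h}^i)^T\boldsymbol{b}^i + c_i$, where $c_i := \frac{1}{2}(\left\Vert\boldsymbol{\alpha}-\boldsymbol{Y}\right\Vert^2 + \left\Vert\boldsymbol{\beta}-\boldsymbol{Z}\right\Vert^2 + \left\Vert\boldsymbol{\gamma}-\boldsymbol{w}\right\Vert^2)$ is independent of $\boldsymbol{b}^i$, and $\boldsymbol{h}^i = \boldsymbol{h}_1^i + \boldsymbol{h}_2^i + \boldsymbol{h}_3^i$ with $\boldsymbol{h}_1^i = \boldsymbol{P}\odot(\boldsymbol{P} + 2(\boldsymbol{\alpha}-\boldsymbol{Y}))$, $\boldsymbol{h}_2^i = \boldsymbol{Q}\odot(\boldsymbol{Q} + 2(\boldsymbol{\beta}-\boldsymbol{Z}))$, and $\boldsymbol{h}_3^i = \boldsymbol{v}\odot(\boldsymbol{v} + 2\boldsymbol{\gamma} - 2\boldsymbol{w})$. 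Substituting back $\boldsymbol{v} = \boldsymbol{A}^i\boldsymbol{U}_{k+1}^i$ and $\boldsymbol{w} = 2(\boldsymbol{r}^i\odot\boldsymbol{Y}_{k+1}^i + \boldsymbol{x}^i\odot\boldsymbol{Z}_{k+1}^i)$ and restoring the iteration indices recovers exactly the expression \eqref{weights}.

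Finally, since the constant $c_i$ and the positive factor $\frac{1}{2}$ are irrelevant to the minimization, $\arg\min_{\boldsymbol{b}^i\in\mathbb{S}} H_i(\boldsymbol{X}_{k+1}^i,\boldsymbol{b}^i,\boldsymbol{\alpha}_k^i,\boldsymbol{\beta}_k^i,\boldsymbol{\gamma}_k^i) = \arg\min_{\boldsymbol{b}^i\in\mathbb{S}} (\boldsymbol{h}^i)^T\boldsymbol{b}^i$, and by \eqref{MST} the latter is precisely the MWRAP on $G$ rooted at $g$ with weight vector $\boldsymbol{h}^i$, which is the claim. The computation itself is bookkeeping; the only genuinely load-bearing step is the reduction $(b^i_e)^2 \to b^i_e$, legitimate solely because membership in $\mathbb{S}$ forces $\boldsymbol{b}^i$ to be binary — under a continuous relaxation of $\boldsymbol{b}^i$ the objective would remain quadratic and no reduction to a linear (MWRAP) objective would be possible. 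A secondary point to state with care is the index matching: the quantities entering $\boldsymbol{h}^i$ are the freshly updated $\boldsymbol{P}_{k+1}^i,\boldsymbol{Q}_{k+1}^i,\boldsymbol{Y}_{k+1}^i,\boldsymbol{Z}_{k+1}^i,\boldsymbol{U}_{k+1}^i$ together with the previous-iterate multipliers $\boldsymbol{\alpha}_k^i,\boldsymbol{\beta}_k^i,\boldsymbol{\gamma}_k^i$, exactly as prescribed by \eqref{admm2}.
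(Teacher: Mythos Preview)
Your proof is correct and follows essentially the same approach as the paper: expand each of the three squared norms in $H_i$, use $\boldsymbol{b}^i\odot\boldsymbol{b}^i=\boldsymbol{b}^i$ on $\mathbb{S}$ to collapse the quadratic part into a linear one, and discard the constant to obtain the MWRAP formulation. The only cosmetic difference is that you state the reduction once via a generic term $\tfrac{1}{2}\Vert\boldsymbol{u}\odot\boldsymbol{b}^i+\boldsymbol{e}\Vert^2$ and then instantiate it three times, whereas the paper expands each of the three terms explicitly.
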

	
	\begin{proof}
		By expending the first term of $H_{i}(\boldsymbol{X}_{k+1}^{i},\boldsymbol{b}^{i},\boldsymbol{\alpha }_{k}^{i},\boldsymbol{\beta }_{k}^{i},\boldsymbol{\gamma }_{k}^{i})$ we get 
		\begin{eqnarray}
			&&\left\Vert \boldsymbol{P}_{k+1}^{i}\odot \boldsymbol{b}^{i}-\boldsymbol{Y}
			_{k+1}^{i}+\boldsymbol{\alpha}_{k}^{i}\right\Vert ^{2}  \label{r1} \\
			&=&\left\Vert \boldsymbol{P}_{k+1}^{i}\odot \boldsymbol{b}^{i}\right\Vert
			^{2}+2\left( \boldsymbol{P}_{k+1}^{i}\odot \boldsymbol{b}^{i}\right)
			^{T}\left( \boldsymbol{\alpha }_{k}^{i}-\boldsymbol{Y}_{k+1}^{i}\right)  
			\notag \\
			&&+\left\Vert \boldsymbol{\alpha }_{k}^{i}-\boldsymbol{Y}_{k+1}^{i}\right\Vert ^{2}  \notag \\
			&=&\left( \boldsymbol{P}_{k+1}^{i}\odot \boldsymbol{P}_{k+1}^{i}\right) ^{T}%
			\boldsymbol{b}^{i}+2\left( \boldsymbol{P}_{k+1}^{i}\odot \left( \boldsymbol{\alpha }_{k}^{i}-\boldsymbol{Y}_{k+1}^{i}\right) \right) ^{T}\boldsymbol{b}^{i}  \notag \\
			&&+\left\Vert \boldsymbol{\alpha }_{k}^{i}-\boldsymbol{Y}_{k+1}^{i}\right\Vert ^{2}  \notag \\
			&=&\left( \boldsymbol{h}_{1}^{i}\right) ^{T}\boldsymbol{b}^{i}+\left\Vert 
			\boldsymbol{\alpha }_{k}^{i}-\boldsymbol{Y}_{k+1}^{i}\right\Vert ^{2}. 
			\notag
		\end{eqnarray}
		Similarly, by expending the second term of $H_{i}(\boldsymbol{X}_{k+1}^{i},\boldsymbol{b}^{i},\boldsymbol{\alpha }_{k}^{i},\boldsymbol{\beta }_{k}^{i},\boldsymbol{\gamma }_{k}^{i})$ We obtain 
		\begin{equation}
			\left\Vert \boldsymbol{Q}_{k+1}^{i}\odot \boldsymbol{b}^{i}-\boldsymbol{Z}%
			_{k+1}^{i}+\boldsymbol{\beta }_{k}^{i}\right\Vert ^{2}=\left( \boldsymbol{h}%
			_{2}^{i}\right) ^{T}\boldsymbol{b}^{i}+\left\Vert \boldsymbol{\beta }%
			_{k}^{i}-\boldsymbol{Z}_{k+1}^{i}\right\Vert ^{2}  \label{r2}
		\end{equation}%
		For the third term, we have 
		\begin{eqnarray}
			&&\left\Vert 
			\begin{array}{c}
				\boldsymbol{b}^{i}\odot \boldsymbol{A}^{i}\boldsymbol{U}_{k+1}^{i}+%
				\boldsymbol{\gamma }_{k}^{i} \\ 
				-2\left( \boldsymbol{r}^{i}\odot \boldsymbol{Y}_{k+1}^{i}+\boldsymbol{x}%
				^{i}\odot \boldsymbol{Z}_{k+1}^{i}\right) 
			\end{array}%
			\right\Vert ^{2}  \label{r3} \\
			&=&\left\Vert \boldsymbol{b}^{i}\odot \boldsymbol{A}^{i}\boldsymbol{U}%
			_{k+1}^{i}\right\Vert ^{2}  \notag \\
			&&+\left\Vert \boldsymbol{\gamma }_{k}^{i}-2\left( \boldsymbol{r}^{i}\odot 
			\boldsymbol{Y}_{k+1}^{i}+\boldsymbol{x}^{i}\odot \boldsymbol{Z}%
			_{k+1}^{i}\right) \right\Vert ^{2}  \notag \\
			&&+2\left( \boldsymbol{b}^{i}\odot \boldsymbol{A}^{i}\boldsymbol{U}%
			_{k+1}^{i}\right) ^{T}\left( \boldsymbol{\gamma }_{k}^{i}-2\left( 
			\begin{array}{c}
				\boldsymbol{r}^{i}\odot \boldsymbol{Y}_{k+1}^{i} \\ 
				+\boldsymbol{x}^{i}\odot \boldsymbol{Z}_{k+1}^{i}%
			\end{array}%
			\right) \right)   \notag \\
			&=&\left( 
			\begin{array}{c}
				\boldsymbol{A}^{i}\boldsymbol{U}_{k+1}^{i}\odot \boldsymbol{A}^{i}%
				\boldsymbol{U}_{k+1}^{i}+ \\ 
				2\boldsymbol{A}^{i}\boldsymbol{U}_{k+1}^{i}\odot \left( \boldsymbol{\gamma }%
				_{k}^{i}-2\left( 
				\begin{array}{c}
					\boldsymbol{r}^{i}\odot \boldsymbol{Y}_{k+1}^{i} \\ 
					+\boldsymbol{x}^{i}\odot \boldsymbol{Z}_{k+1}^{i}%
				\end{array}%
				\right) \right) 
			\end{array}%
			\right) ^{T}\boldsymbol{b}^{i}  \notag \\
			&&+\left\Vert \boldsymbol{\gamma }_{k}^{i}-2\left( \boldsymbol{r}^{i}\odot 
			\boldsymbol{Z}_{k+1}^{i}+\boldsymbol{x}^{i}\odot \boldsymbol{Z}%
			_{k+1}^{i}\right) \right\Vert ^{2}  \notag \\
			&=&\left( \boldsymbol{h}_{3}^{i}\right) ^{T}\boldsymbol{b}^{i}+\left\Vert 
			\boldsymbol{\gamma }_{k}^{i}-2\left( \boldsymbol{r}^{i}\odot \boldsymbol{Z}%
			_{k+1}^{i}+\boldsymbol{x}^{i}\odot \boldsymbol{Z}_{k+1}^{i}\right)
			\right\Vert ^{2}  \notag
		\end{eqnarray}
		Above, we have used the fact that $\boldsymbol{b}^{i}\odot \boldsymbol{b}^{i}=\boldsymbol{b}^{i}$ for $\boldsymbol{b}^{i}\in \mathbb{S}$. By combining \eqref{r1}--\eqref{r3} we get (\ref{weights}). 
	\end{proof}
	
	So, at each iteration, each agent represented by $i\in \mathcal{V}$ is assigned the task of solving both the linearly constrained quadratic problem~\eqref{admm1} and a MWRAP with weights $\boldsymbol{h}^i$. The time complexity of the algorithm in each iteration is primarily determined by the quadratic problem since Edmond's algorithm has a lower time complexity in comparison.
	
	\begin{remark}
		We could solve problem~\eqref{admm1} as a linearly constrained quadratic programming problem incorporating constraint~\eqref{C3}, thereby eliminating the need for the update with respect to \(\boldsymbol{\gamma}_k^i\) in \eqref{admm5}. However, this approach might render problem~\eqref{admm1} infeasible for certain values of \(k\), causing the iteration to halt. By separating \eqref{admm5} from the constrained quadratic programming problem~\eqref{admm1}, we can avoid this issue.
	\end{remark}
	
	Let $\eta >0$ and $k_{\max}$ be an error tolerance parameter and the maximum number of iterations respectively. Let $e_{k}$, $k\geq 0$, be the successive change in the iterates \eqref{admm1}--\eqref{admm6}, defined by 
	\begin{equation}
		e_{k}=\sum_{i\in \mathcal{V}}\left( s_{k}^{i}+r_{k}^{i}+\sum_{j\in N(i)}\left\Vert 
		\boldsymbol{b}_{k}^{i}-\boldsymbol{b}_{k}^{j}\right\Vert \right) ,
		\label{error}
	\end{equation}
	where $s_{k}^{i}$ and $r_{k}^{i}$ are defined by  
	\begin{eqnarray}
		s_{k}^{i} &:=&\left\Vert \left( \boldsymbol{X},\boldsymbol{b}
		\right)_{k+1}^{i} -\left( \boldsymbol{X},\boldsymbol{b}\right)_{k}^{i}\right
		\Vert,  \notag \\
		r_{k}^{i} &:=&\left\Vert \left( \boldsymbol{\alpha },\boldsymbol{%
			\beta},\boldsymbol{\gamma},\boldsymbol{\lambda}\right)_{k+1}^{i}-\left(\boldsymbol{\alpha},\boldsymbol{\beta},\boldsymbol{\gamma},\boldsymbol{\lambda}\right)_{k}^{i}\right\Vert .  \notag
	\end{eqnarray}
	The term $\sum\limits_{j \in N(i)} \left\Vert \boldsymbol{b}_{k}^{i} - \boldsymbol{b}_{k}^{j} \right\Vert$ has been added \eqref{error} to guarantee the convergence of the local variables $\boldsymbol{b}^{i}$, $i\in \mathcal{V}$, to the same limit $\boldsymbol{b}$ (consensus) as there is no guarantee of its occurrence, see Remark~\ref{Remark}. Our algorithm is stated as follows:  
	\begin{algorithm}[H]
		\caption{Distributed Algorithm}
		\label{alg:Algorithm}
		\begin{algorithmic}[1]
			\REQUIRE $\boldsymbol{\rho}^1$, $\boldsymbol{\rho}^2$, $\boldsymbol{r}$, $\boldsymbol{x}$, $\delta>0$, $\eta>0$.
			
			\STATE Initialize $\boldsymbol{X}^i_0=\boldsymbol{\lambda}_0^i=\boldsymbol{0}_{4|\mathcal{A}|+|\mathcal{V}|}$, $\boldsymbol{\alpha}_0^i=\boldsymbol{\beta}^i_0=\boldsymbol{\gamma}^i_0=\boldsymbol{0}_{\mathcal{|A|}}$, $\boldsymbol{b}^i_0=\boldsymbol{b}_0\sim \mathcal{N}\left(0, \sigma \boldsymbol{I}_{|\mathcal{A}|}\right)$, $\forall i\in \mathcal{V}$, $k=0$.
			
			\REPEAT
			\FOR{each agent $i \in \mathcal{V}$}
			\STATE Update $\boldsymbol{X}_{k+1}^i$ according to \eqref{admm1}.
			\STATE Update $\boldsymbol{b}^i_{k+1}$ according to \eqref{admm2}.
			\STATE Update $\boldsymbol{\alpha}_{k+1}^i$, $\boldsymbol{\beta}_{k+1}^i$, $\boldsymbol{\gamma}_{k+1}^i$, $\boldsymbol{\lambda}_{k+1}^i$ according to \eqref{admm3}--\eqref{admm6}.
			\STATE $k \gets k+1$.
			\ENDFOR
			\UNTIL{$e_{k} < \eta$ or $k \geq k_{\max}$}
		\end{algorithmic}
	\end{algorithm}
	Since the convergence of \( e_k \) to zero in Algorithm~\ref{alg:Algorithm} is not guaranteed, we increase the likelihood of convergence by running this algorithm multiple times with different initial conditions \(\boldsymbol{b}_0\).
	
	\begin{remark}
		Examining the convergence of the proposed algorithm is a challenging task that falls outside the purview of this paper due to the mixed-integer nature of the problem. Instead, in the following section, we conduct various experiments to showcase both the effectiveness and limitations of our algorithms. It's crucial to emphasize that in the realm of non-convex optimization problems, ADMM does not assure of attaining an optimal, satisfactory, or even feasible solution within a finite number of iterations \cite{boyd-biconvex}. Additionally, it is noteworthy that ADMM's performance is contingent on the choice of initialization $\boldsymbol{b}^{0}$ and penalty parameter $\delta$, unlike the convex case where ADMM converges to an optimal solution regardless of the chosen $\boldsymbol{b}^{0}$ and $\delta$. 
	\end{remark}
	
	
	\section{Numerical Experiments}\label{Sec:num}
	\ym{	
		In this section, we evaluate the performance and limitations of Algorithm~\ref{alg:Algorithm} through a series of numerical simulations conducted across various test cases. The objective is to validate the algorithm’s robustness, adaptability, and scalability in both synthetic and realistic scenarios. The section is organized into three parts.}
	
	\ym{The first subsection investigates a small-scale synthetic network consisting of five nodes, designed to assess the algorithm’s resilience in response to dynamic changes such as topology modifications and variations in supply-demand profiles. We show that Algorithm~\ref{alg:Algorithm} successfully reconfigures the network and consistently converges to a radial solution under these perturbations.}
	
	\ym{The second subsection examines the 33-bus distribution test system, originally proposed in~\cite{baran1989network}. We analyze the convergence behavior of the algorithm under both nominal and faulted conditions. Additionally, a comparative study is presented, contrasting Algorithm~\ref{alg:Algorithm} with its centralized version proposed in \cite{mokhtari2025alternating} and the distributed algorithms developed in~\cite{shen2019distributed, nejad2021enhancing}. The comparison focuses on solution quality (optimality), convergence speed, and feasibility.}
	
	\ym{The third subsection demonstrates the scalability of Algorithm~\ref{alg:Algorithm} by applying it to a large-scale power distribution network operated by SRD, a French transmission system operator. This case study also includes a fault-handling experiment to assess the algorithm’s real-time resilience.}
	
	\ym{Unless otherwise specified, the tolerance parameter $\varepsilon$ for the error metric $e_k$, defined in~\eqref{error}, is fixed at $10^{-4} \times |\mathcal{V}|$, and the penalization parameter $\delta$ is set to $1$. The quadratic subproblem~\eqref{admm1} is solved using MATLAB's \textsf{quadprog} function. The MWRAP is solved using an implementation of Edmonds' algorithm, adapted from \cite{Choudhary:algo23:Edmond}.}
	
	
	\subsection{Resilience: a toy network}
	\subsubsection{Objective}\ym{The aim of this subsection is to demonstrate that Algorithm~\ref{alg:Algorithm} possesses the capability to adapt to dynamic changes within the network environment as they occur in real-time. These adaptations include the detection of modifications such as integrating a new source of energy, switches going offline, or shifts in network traffic patterns. The algorithm subsequently reconfigures the network in response to these changes during its execution without restarting.
		\subsubsection{Experience} This process is illustrated using a simple example involving a graph of size $|\mathcal{V}| = 5$ with one substation indexed by node $1$. The voltage constraint~\eqref{voltage} and its corresponding dual updates are omitted. We set the cost coefficients as $c_{ij} = i + j$ for all $(i,j) \in \mathcal{A}$. The experience is divided into four phases:}
	
	\begin{itemize}
		\item \ym{\textbf{Phase 1: $k=1$ to $k=100$:} Algorithm~\ref{alg:Algorithm} is executed with initial supply-demand vectors $\boldsymbol{\rho}^i = (4, -1, -1, -1, -1)^T$, $i=1,2$.}
		
		\item \ym{\textbf{Phase 2: $k=100$ to $k=200$:} A line failure is simulated on edge $(2,1)$ by assigning zero values to $b_{2,1}^{1}$, $b_{1,2}^{1}$, $b_{1,2}^{2}$, and $b_{2,1}^{2}$, with this change locally visible to agents 1 and 2.}
		
		\item \ym{\textbf{Phase 3: $k=200$ to $k=300$:} The supply-demand vector is modified to $\boldsymbol{\rho} = (4, -1, -2, 2, -3)^T$, increasing demand on buses 3 and 5 while incorporating renewable generation from node 4.}
		
		\item \ym{\textbf{Phase 4: $k=300$ to convergence:} The previously failed line $(1,2)$ is restored. The algorithm is then adjusted to accommodate this structural change while enforcing radiality constraints.}
	\end{itemize}
	\subsubsection{Results}
	\ym{ After running Algorithm~\ref{alg:Algorithm}, we obtained the following results:}
	
	\begin{itemize}
		\item \ym{In \textbf{Phase 1}, the algorithm converges efficiently to the optimal radial configuration under the initial load distribution, as depicted in Figure~\ref{Phase 1}.}
		
		\item \ym{In \textbf{Phase 2}, the algorithm detects the failure of line $(2,1)$ without global knowledge and successfully reconfigures the network. It converges to a new feasible radial structure as shown in Figure~\ref{Phase 2}.}
		
		\item \ym{In \textbf{Phase 3}, despite dynamic changes in demand and renewable supply, Algorithm~\ref{alg:Algorithm} remains robust and converges to the adjusted optimum, illustrated in Figure~\ref{Phase 3}.}
		
		\item \ym{In \textbf{Phase 4}, after restoring line $(1,2)$, the algorithm adapts and converges to a new feasible solution that incorporates the restored topology, while preserving radiality. The final configuration is illustrated in Figure~\ref{Phase 4}.}
	\end{itemize}
	\ym{The evolution of the iterative process, along with several key quantities across different phases, is illustrated in Figures~\ref{Fig:real-time1}--\ref{Fig:real-time4}. These figures demonstrate that all five agents progressively converge to a common solution, thereby confirming the achievement of consensus.}
	
	\subsubsection{Conclusion}
	\ym{This example highlights the algorithm’s resilience and adaptability. It efficiently handles real-time changes such as line faults and load shifts while maintaining feasibility, consensus, and radiality throughout all phases.}

	\begin{figure}[t!]
		\centering
		\par
		\begin{minipage}[b]{0.45\linewidth}
			\centering
			\subfloat[Phase 1]{\includegraphics[width=1\linewidth]{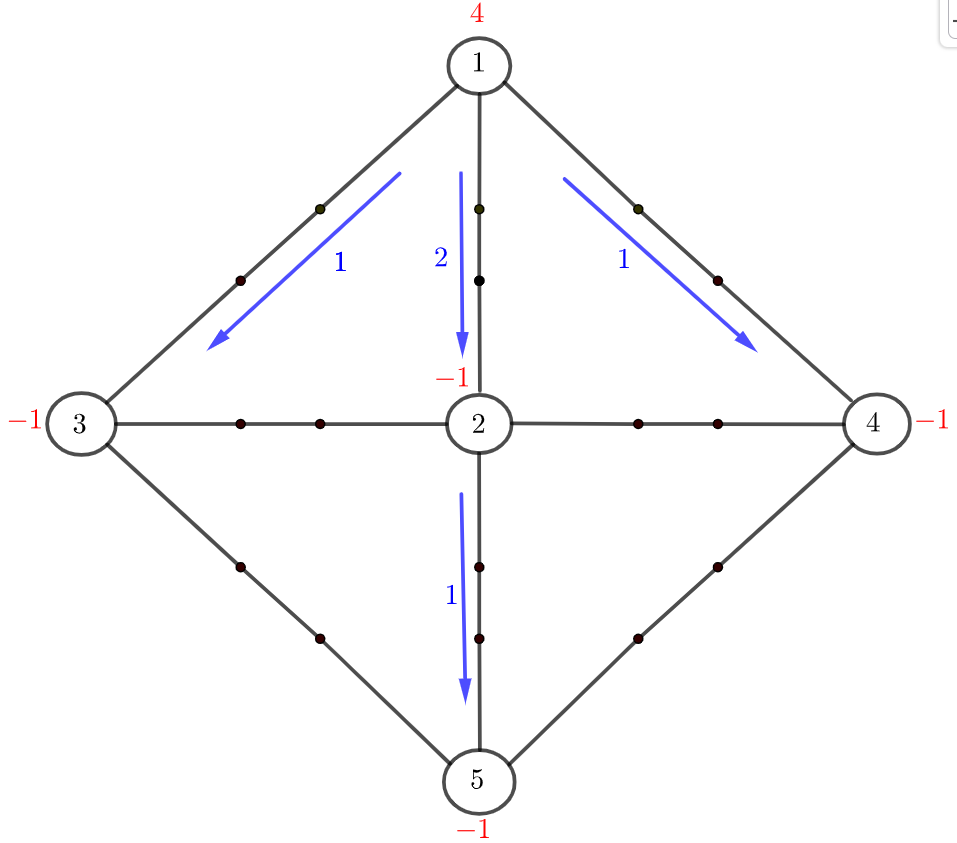}
				\label{Phase 1}}
		\end{minipage}
		\hfil
		\begin{minipage}[b]{0.45\linewidth}
			\centering
			\subfloat[Phase 2]{\includegraphics[width=1\linewidth]{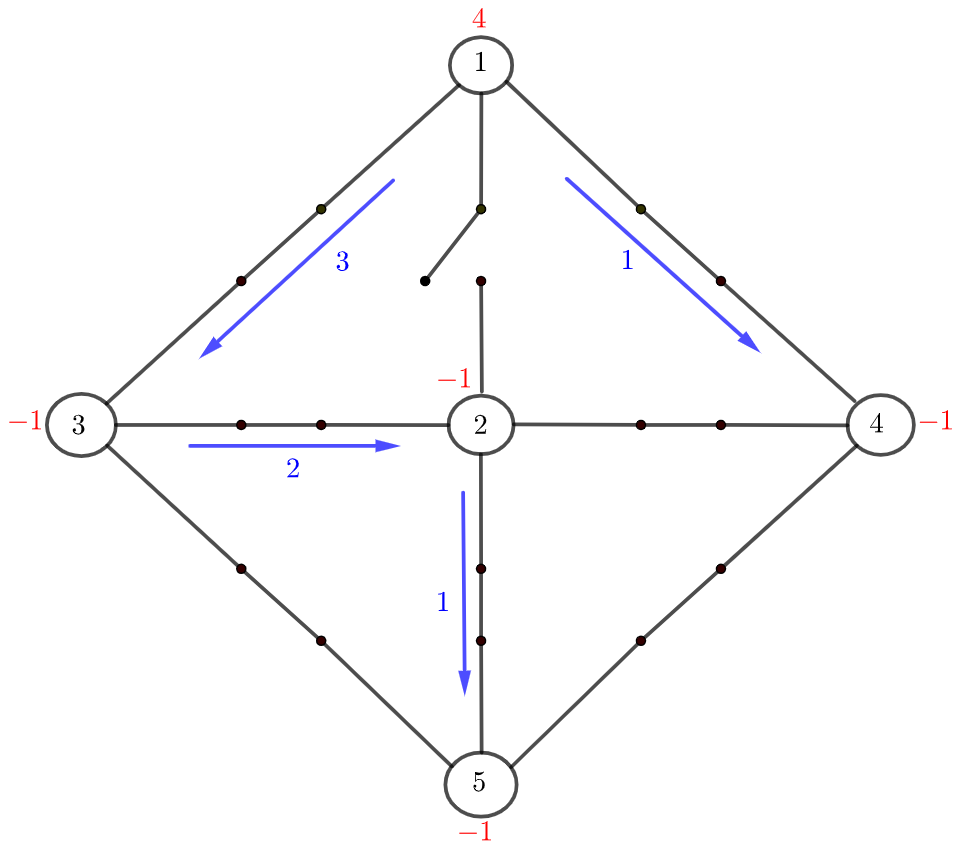}
				\label{Phase 2}}
		\end{minipage}
		\par
		\begin{minipage}[b]{0.45\linewidth}
			\centering
			\subfloat[Phase 3]{\includegraphics[width=1\linewidth]{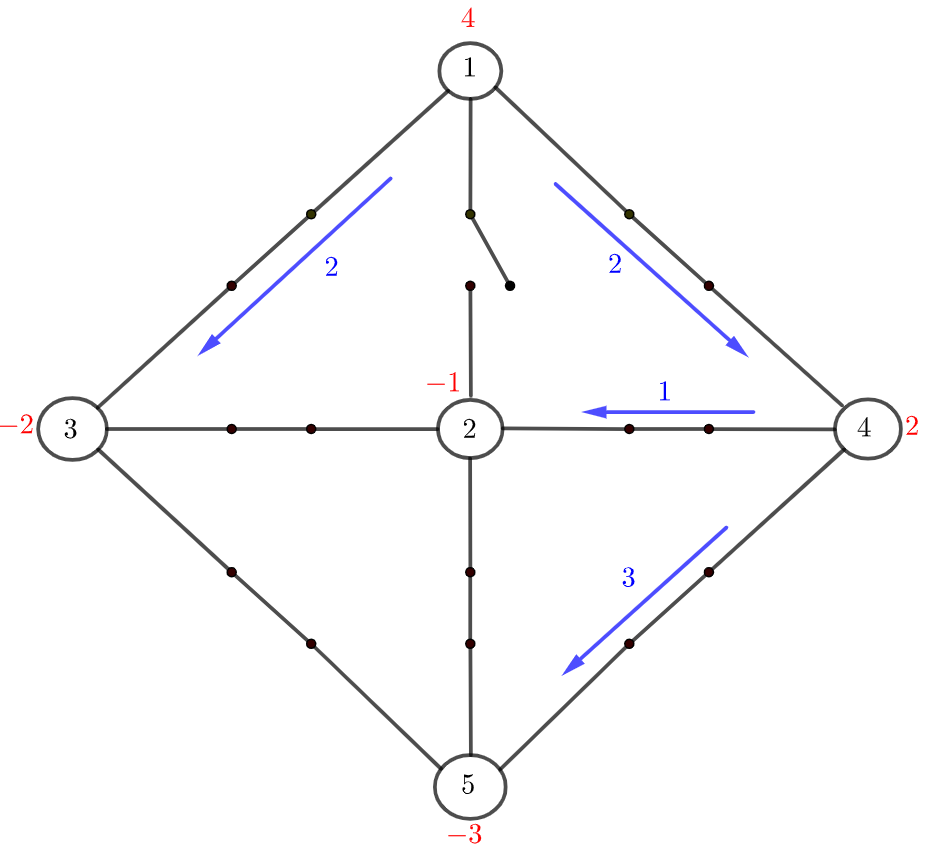}%
				\label{Phase 3}}
		\end{minipage}
		\hfil
		\begin{minipage}[b]{0.45\linewidth}
			\centering
			\subfloat[Phase 4]{\includegraphics[width=1\linewidth]{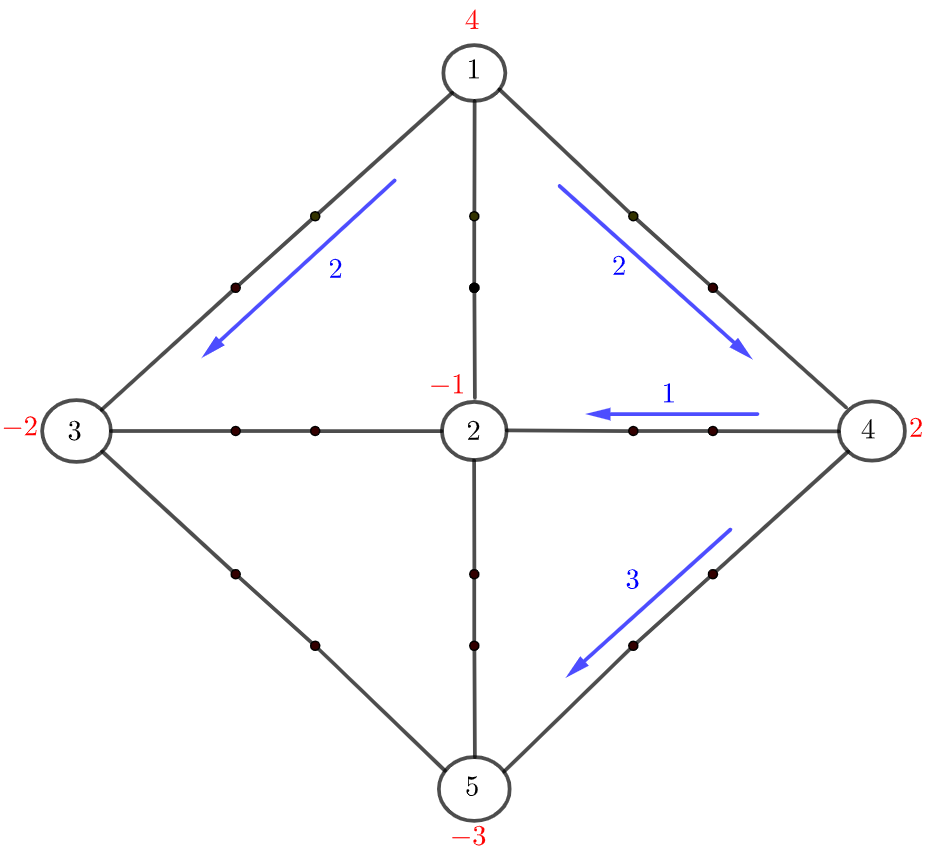}
				\label{Phase 4}}
		\end{minipage}
		\caption{The various phases of the experience. The supply-demand of nodes and the flow direction are highlighted in red and blue, respectively.}
		\label{fig_phases}
	\end{figure}
	
	\begin{figure}[t!]
		\centering
		\par
		\begin{minipage}[b]{0.45\linewidth}
			\centering
			\subfloat[Evolution of $e_k$ in function of iteration number.]{\includegraphics[width=\textwidth]{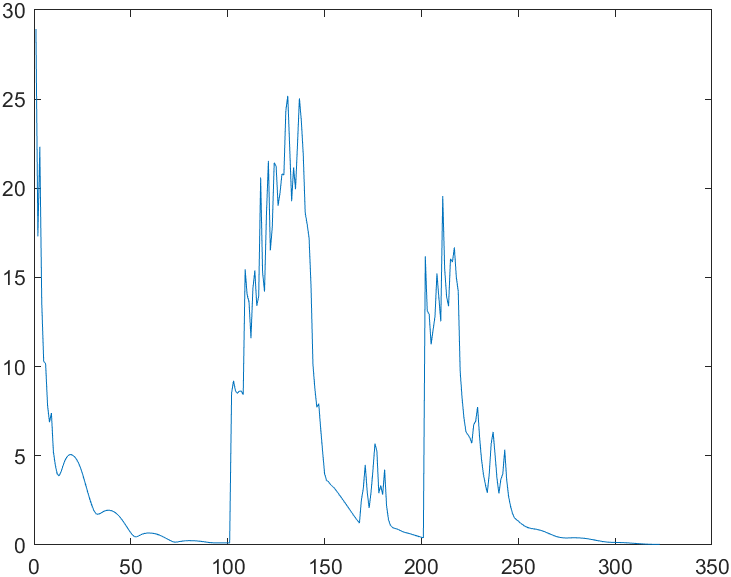}%
				\label{Fig:real-time1}}
		\end{minipage}
		\hfil
		\begin{minipage}[b]{0.45\linewidth}
			\centering
			\subfloat[Evolution of $\left\| \boldsymbol{P}_k^i \right\|^2$ in function of iteration number.]{\includegraphics[width=\textwidth]{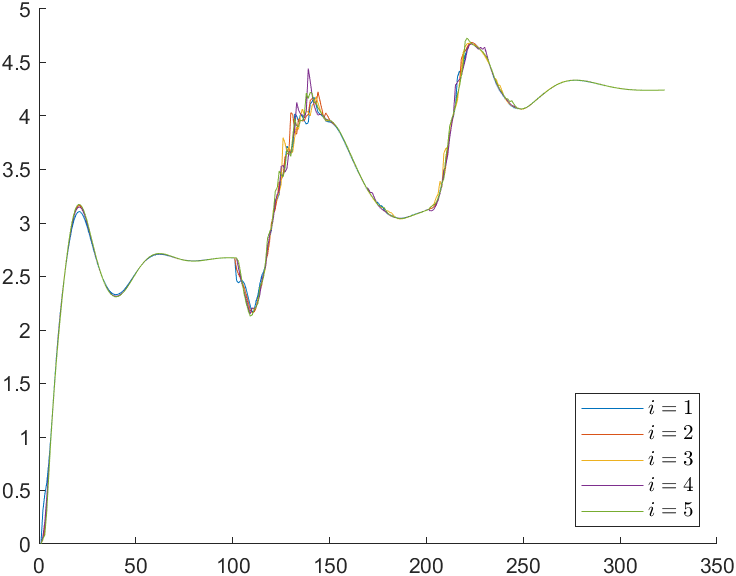}				\label{Fig:real-time4}}
		\end{minipage}
		\caption{The various plots show the evolution of different parameters in function of iteration number.}
		\label{fig_realtime}
	\end{figure}
	
	\subsection{Baran and Wu test case}
	\ym{\subsubsection{Objective}
		The objective of this subsection is to evaluate the performance of Algorithm~\ref{alg:Algorithm} on the 33-bus test system under four distinct scenarios:}
	
	\begin{itemize}
		\item \ym{\textbf{Scenario 1: Nominal Conditions.} This scenario evaluates the algorithm's ability to reach convergence and consensus when executed under normal conditions, with no structural changes or perturbations throughout the process.}
		
		\item \ym{\textbf{Scenario 2: Fault Resilience.} A fault is introduced by disabling a specific line in the network. This scenario tests the algorithm’s capacity to detect and adapt to the fault while converging to a new feasible solution that satisfies the network constraints.}
		
		\item \ym{\textbf{Scenario 3: Comparison with Centralized Algorithm.} The distributed version is compared with the centralized algorithm proposed in~\cite{mokhtari2025alternating}, with a focus on solution quality, iteration count, and execution time.}
		
		\item \ym{\textbf{Scenario 4: Comparison with Other Distributed Methods.} The proposed algorithm is benchmarked against the methods developed in~\cite{shen2019distributed} and~\cite{nejad2021enhancing}, as well as their enhanced versions, which will be introduced in the following subsection.}
	\end{itemize}
	
	Algorithm~1 has been tested on the Baran and Wu test case \cite	{baran1989network} which comprises $33$ nodes, $32$ branches, and $5$ tie lines, with a total active and reactive demand equal to $3.715$ MW and $2.24$ MVAr, respectively, and voltage base $12.66$ kV, and a single substation indexed by $1$ linked to a generator, as illustrated in Figure \ref{fig:33bus}. The data of the network can be found in \cite{baran1989network}. The optimization process involves comparing the outcome of Algorithm~\ref{alg:Algorithm} with an initial configuration defined by the inactivation of the following lines: $(25,29)$, $(18,33)$, $(9,15)$, $(8,21)$, $(12,22)$ resulting in a total approximate loss of $183.74$ kW (against $202.7$ kW by using the non-simplified model \cite{khodr2009distribution}). \ym{The result for each scenario is presented below.}

	\begin{figure}[t!]
		\centering
		\includegraphics[width=0.6\linewidth, height=0.2\textheight]{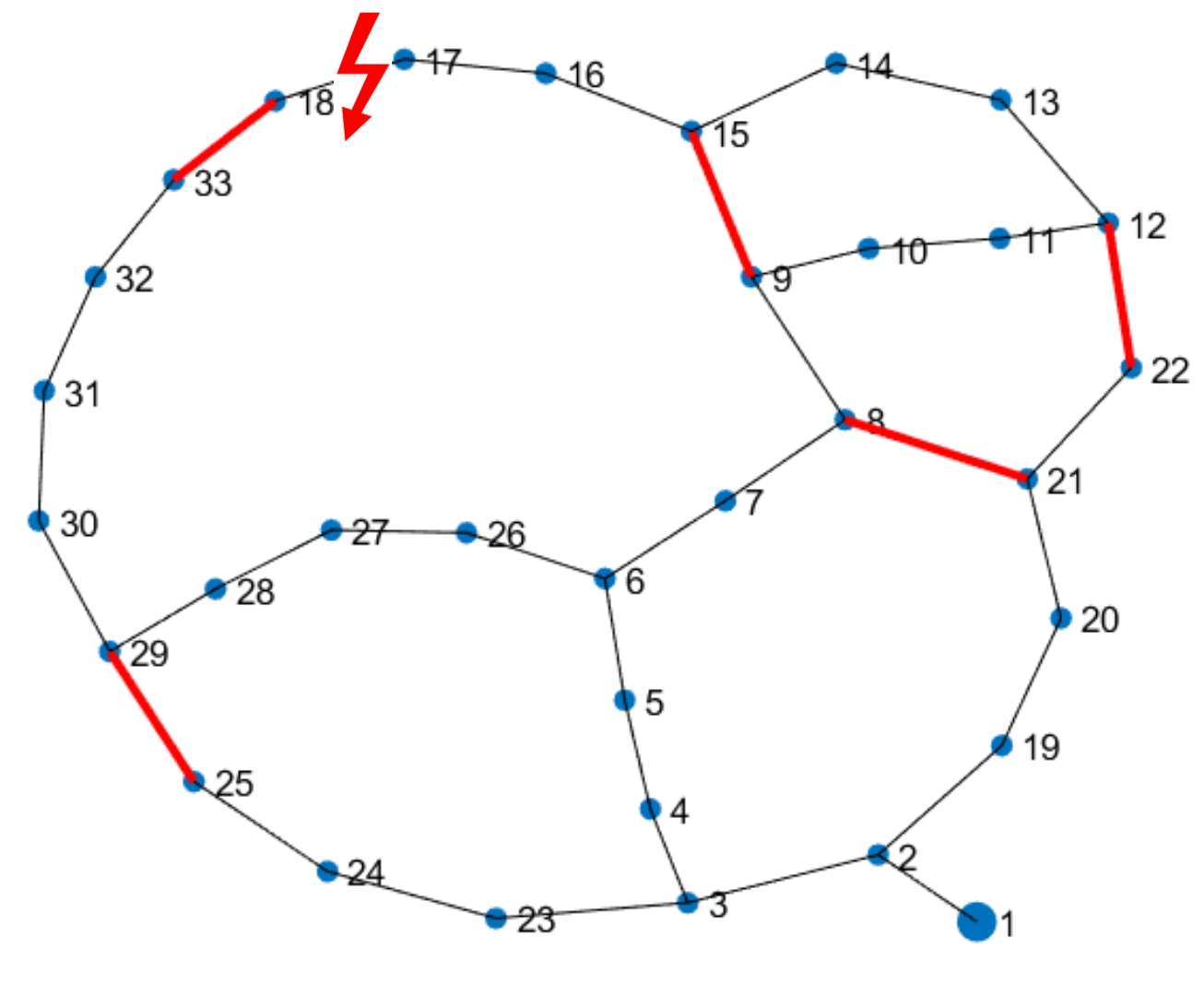} 
		\caption{Baran and Wu network. The lines in red are inactive in the initial
			configuration. The Fault occurs in the line $(17,18)$ after the $1200$th
			iteration.}
		\label{fig:33bus}
	\end{figure}
	\subsubsection{Convergence}
	
	As Algorithm~\ref{alg:Algorithm} is highly influenced by its initial conditions, it underwent ten executions, each with a distinct $\boldsymbol{b}_{0}$. Among these runs, it effectively determined the best solution, activating switches $(8,21)$, $(9,10)$, $(14,15)$, $(28,29)$, and $(32,33)$. This resulted in a loss of $132.28$ kW (against $139.5$ kW by using the non-simplified model \cite{khodr2009distribution}), with $\delta $ set at $0.1$. Achieving this outcome required $1349$ iterations, see Figure~\ref{fig:ek_nofault}. \ym{The consensus behavior is also illustrated in Figure~\ref{fig:xnorm_nofault}.}
	
	Notably, this optimal configuration was attained in only $2$ of the $10$ restarts. In the remaining runs, Algorithm~\ref{alg:Algorithm} converged but without reaching the optimal solution. See Table~\ref{Table1} for more details. We chose the $\delta $ value in this specific manner to increase the system's sensitivity to cost considerations over the quadratic penalization. It is essential to recognize, however, that increasing the value of $\delta $ can potentially sacrifice optimality while simultaneously reducing the number of iterations. 
	
	\subsubsection{Resilience}
	
	We demonstrate the resilience of Algorithm~\ref{alg:Algorithm} by introducing a fault in the line $(17,18)$ (any other line can be chosen) after the $1200$th iteration. This is done by setting $b_{(17,18)}^{17}=b_{(18,17)}^{17}=b_{(17,18)}^{18}=b_{(18,17)}^{18}=0$, ensuring that the fault is only known locally to agents $17$ and $18$. Under these conditions, Algorithm~\ref{alg:Algorithm} required $2889$ iterations to converge to the solution with open switches at $(7,8)$, $(10,11)$, $(14,15)$, $(17,18)$, and $(28,29)$, resulting in a loss of $157.8$ kW. The progression of the error $e_{k}$ \ym{and the agent's norm behavior are illustrated in Figures \ref{fig:ek_fault} and \ref{fig:xnorm_fault}}.
	
	\begin{table*}[ht!]
		\centering
		\caption{Execution results of Algorithm~\ref{alg:Algorithm} on the 33-bus system for various $\delta$ values. Averages are computed over 10 runs.}	\begin{tabular}{|l|c|c|c|c|}
			\hline
			& Average Losses (kW) & Min-Max Losses (kW) & Average No. of Iterations & Feasibility Ratio \\
			\hline
			$\delta=0.1$ & 140.1 & 132.28-143.11 & 1430 & $10/10$ \\
			$\delta=1$ & 156.43 & 143.76-163.18 & 1190 & $10/10$ \\
			$\delta=10$ & 172.3 & 162.9-180.2 & 871 & $10/10$ \\
			$\delta=100$ & 188.4 & 176.08-198.55 & 517 &  $10/10$ \\
			$\delta=10^8$ & 201.76 & 190.65-212.62 & 497 & $10/10$ \\
			\hline
		\end{tabular}
		\label{Table1}
	\end{table*}\begin{figure}[t!]
		\centering
		
		\subfloat[Error evolution $e_k$ without fault.]{%
			\includegraphics[width=0.45\linewidth]{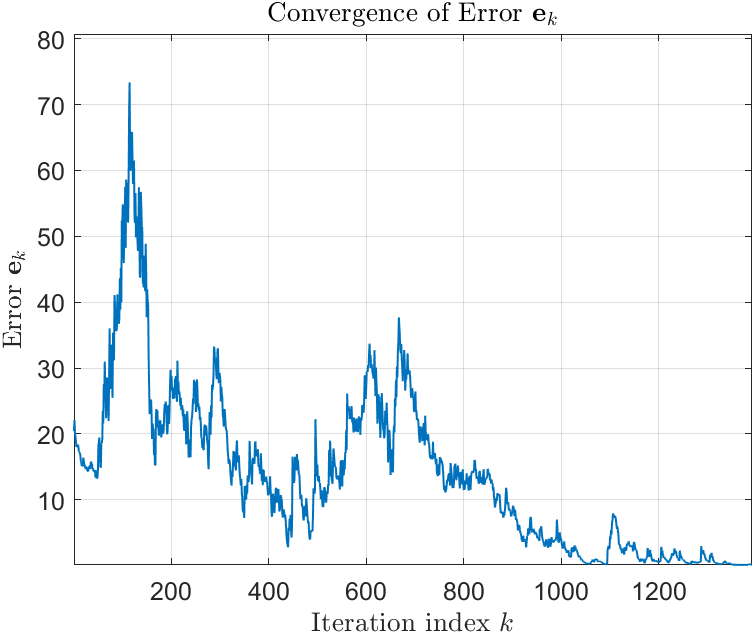}%
			\label{fig:ek_nofault}
		} \hfill
		\subfloat[Zoom-in on the $\ell^2$-norm of the 33 agents, $\|\boldsymbol{X}_k^{i}\|$ without fault.]{%
			\includegraphics[width=0.45\linewidth]{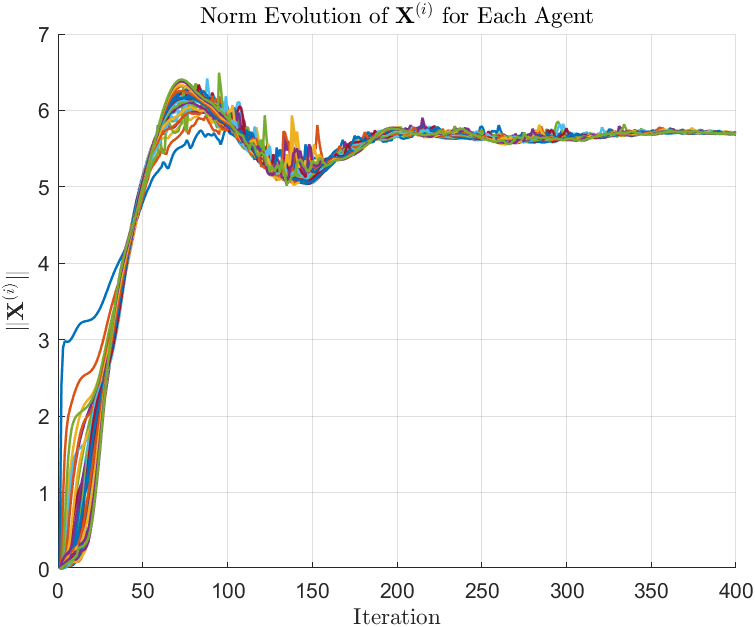}%
			\label{fig:xnorm_nofault}
		}
		
		\vspace{0.5em} 
		
		\subfloat[Error evolution $e_k$ with fault at iteration 1200.]{%
			\includegraphics[width=0.45\linewidth]{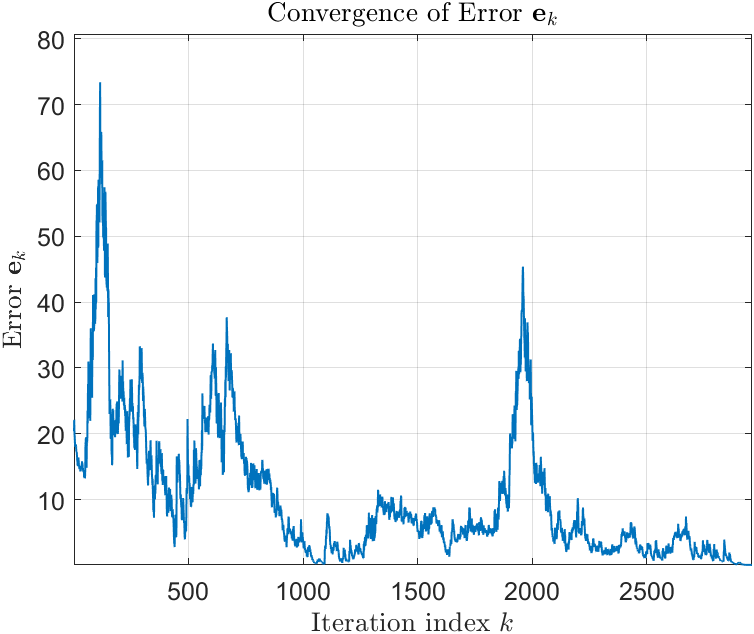}%
			\label{fig:ek_fault}
		} \hfill
		\subfloat[$\ell^2$-norm of the 33 agents $\|\boldsymbol{X}_k^{i}\|$ with fault at iteration 1200.]{%
			\includegraphics[width=0.45\linewidth]{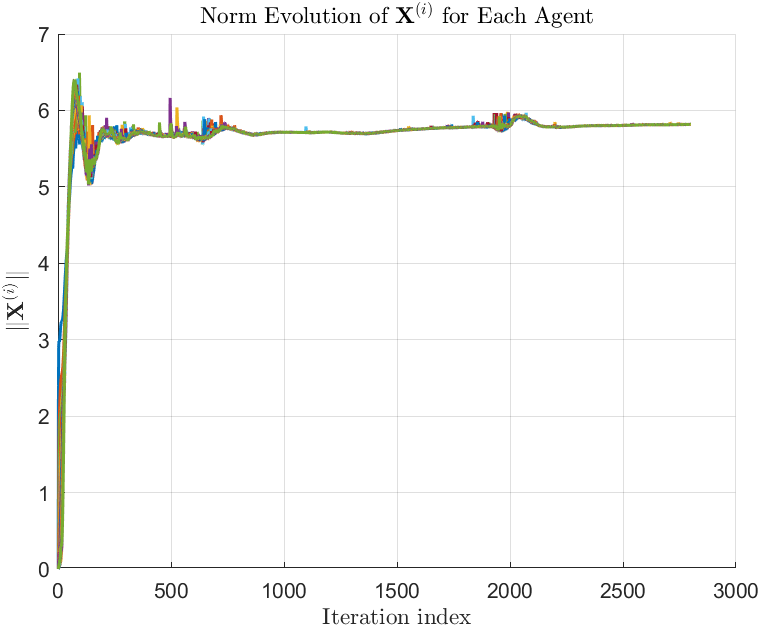}%
			\label{fig:xnorm_fault}
		}
		
		\caption{Evolution of the error $e_k$ and the agent norm $\|\boldsymbol{X}_k^{(i)}\|$ in scenarios without fault (top) and with a fault introduced at iteration 1200 (bottom). All the agents converge toward the same limit (consensus).}
		\label{fig:all_errors}
	\end{figure}

	\subsection{Centralized vs distributed}
	
	In this section, we compare Algorithm \ref{alg:Algorithm} with its centralized counterpart \ym{proposed in \cite{mokhtari2025alternating}}, which involves the following iterations:
	\begin{eqnarray*}
		\boldsymbol{X}^{k+1} &\in &\underset{\boldsymbol{X}\in \mathbb{X}}{\arg \min 
		}\ \delta ^{-1}\boldsymbol{r}^{T}\left( \boldsymbol{Y}\odot \boldsymbol{Y+Z} \odot \boldsymbol{Z}\right)  \\
		&&+H(\boldsymbol{X},\boldsymbol{b}_{k},\boldsymbol{\alpha }_{k},\boldsymbol{\beta }_{k},\boldsymbol{\lambda }_{k}\mathbb{)}, \\
		\boldsymbol{b}^{k+1} &\in &\underset{\boldsymbol{b\in }\mathbb{S}}{\arg \min 
		}\text{ }H(\boldsymbol{X}_{k+1},\boldsymbol{b},\boldsymbol{\alpha }_{k},	\boldsymbol{\beta }_{k},\boldsymbol{\lambda }_{k}\mathbb{)}, \\
		\boldsymbol{\alpha }^{k+1} &=&\boldsymbol{\alpha }^{k}+\boldsymbol{P}%
		^{k+1}\odot \boldsymbol{b}^{k+1}-\boldsymbol{Y}^{k+1}, \\
		\boldsymbol{\beta }^{k+1} &=&\boldsymbol{\beta }^{k}+\boldsymbol{Q}%
		^{k+1}\odot \boldsymbol{b}^{k+1}-\boldsymbol{Z}^{k+1}, \\
		\boldsymbol{\lambda }^{k+1} &=&\boldsymbol{\lambda }^{k}+\boldsymbol{b\odot
			AU}_{k+1} \\
		&&-2\left( \boldsymbol{r}\odot \boldsymbol{Y}_{k+1}+\boldsymbol{x}\odot 
		\boldsymbol{Z}_{k+1}\right) 
	\end{eqnarray*}%
	In this context, all vectors are defined similarly but without the agent's indices $i \in \mathcal{V}$. Similar to Proposition \ref{Proposition}, we can show that the problem for $\boldsymbol{b}$ is an MWRSA problem with the same weights, excluding the agents' indices.
	\begin{table*}[ht!]
		\centering
		\caption{Results for the centralized algorithm on the 33-bus system with different values of $\delta$.}
		\begin{tabular}{|l|c|c|c|c|}
			\hline
			& Average Losses (kW) & Min-Max Losses (kW) & Average No. of Iterations & Feasibility Ratio \\
			\hline
			$\delta=1$ & 137.8 & 132.09--140.87 & 146 & $10/10$ \\
			$\delta=10$ & 152.3 & 146.7--158.4 & 31 & $10/10$ \\
			$\delta=100$ & 166.8 & 152.12--188.7 & 14 & $10/10$ \\
			$\delta=10^8$ & 198.2 & 175.2--224 & 4 & $10/10$ \\
			\hline
		\end{tabular}
		\label{Table_centralized}
	\end{table*}
	
	For the centralized algorithm, we fix the error tolerance parameter at $10^{-4}$. 
	
	Table~\ref{Table_centralized} presents the results of the centralized algorithm on the 33-bus system. As indicated in Table~\ref{Table1}, the distributed algorithm requires significantly more iterations to reach convergence compared to the centralized version. This is due to the distributed approach's limitation, where agents only have information about their neighboring nodes.
	
	The centralized algorithm consistently achieves lower average losses compared to the distributed algorithm and exhibits more consistent performance with narrower min-max loss ranges. It requires fewer iterations and has shorter overall computation times compared to the global convergence time of the distributed algorithm. Specifically, the centralized algorithm takes 8.27 seconds on average, whereas the distributed algorithm takes 61.3 seconds for all 33 agents, or approximately 1.86 seconds per agent. Both algorithms maintain a perfect feasibility ratio. While the centralized algorithm provides more optimal and efficient solutions, the distributed algorithm has a shorter computation time per agent.

	\subsubsection{Algorithm \protect\ref{alg:Algorithm} Compared to \protect\cite{shen2019distributed} and \protect\cite{nejad2021enhancing}}
	
	In this section, we compare Algorithm~\ref{alg:Algorithm} with those proposed in \cite{shen2019distributed} and \cite{nejad2021enhancing}, where a relaxation technique is used instead of the variable substitution \eqref{substitution}. The combinatorial problem with respect to $\boldsymbol{b}^{i}$ for $i \in \mathcal{V}$ is defined as:
	\begin{equation}
		\boldsymbol{b}^{i} = \underset{\boldsymbol{b}^{i} \in \mathbb{S}}{\arg \min} \left\Vert \boldsymbol{w}_{k+1}^{i} - \boldsymbol{b}^{i} + \boldsymbol{u}_{k}^{i} \right\Vert ,  \label{P}
	\end{equation}
	where $\boldsymbol{w}^{i}\in [0,1]^{|\mathcal{A}|}$ is the relaxation of $\boldsymbol{b}^{i}$ and $\boldsymbol{u}^{i}$ is the Lagrange multiplier for the constraints $\boldsymbol{w}^{i} = \boldsymbol{b}^{i}$, $i \in \mathcal{V}$. The improvement of \cite{nejad2021enhancing} over \cite{shen2019distributed} lies in the level of the projection operator where the Douglas-Rachford splitting method was used. However, in both papers, problem~\eqref{P} is treated as a projection problem of $\boldsymbol{w}^{i}_{k+1} + \boldsymbol{u}_{k}^{i}$  onto the set $\{0,1\}^{|\mathcal{A}|}$, which does not ensure the tree structure of the solution. Instead, in the same spirit of Proposition~\ref{Proposition}, it can be shown that problem~\eqref{P} can be solved by solving an MWRAP with weights \ym{(see \cite[Proposition 3.2]{mokhtari2025alternating})}:
	\begin{equation}
		\boldsymbol{h}^{i} = -\boldsymbol{w}_{k+1}^{i} - \boldsymbol{u}_{k}^{i}, \quad i \in \mathcal{V}. \label{W}
	\end{equation}
	We refer to the algorithm from \cite{shen2019distributed} and \cite{nejad2021enhancing}, which handles problem~\eqref{P} by solving the MWRAP with weights, as "Algorithm \emph{relax}". Unfortunately, the algorithm from \cite{lopez2023enhanced} cannot be included in the comparison process since no explicit formula for the projection operator was provided. As before, we run Algorithm~\ref{alg:Algorithm}, Algorithm \emph{relax}, and the algorithms from \cite{shen2019distributed} and \cite{nejad2021enhancing} 10 times with different $\boldsymbol{b}_0$ then we take the average of these runs.
	
	\begin{table}[ht!]
		\centering
		\caption{Iteration counts and feasibility ratio over 10 runs of Algorithm~\ref{alg:Algorithm}, Algorithm \emph{relax}, and the algorithms from \cite{shen2019distributed} and \cite{nejad2021enhancing}. \\ 
			\hspace{\textwidth}¤ : heuristic did not converge after $k_{\max}=5000$. \\ 
			\hspace{\textwidth}* : the solution is not radial.}
		
		\begin{tabular}{|l|c|c|c|c|}
			\hline
			& Algorithm~\ref{alg:Algorithm}& Algorithm \emph{relax} &\cite{shen2019distributed}& \cite{nejad2021enhancing} \\
			\hline
			$\delta=0.1$ & 1430 – 10/10  & ¤ & ¤ &  ¤\\
			$\delta=1$ & 1190 – 10/10 & 4761 – 6/10 & ¤ & ¤ \\
			$\delta=10$ & 871 – 10/10 & 3981 – 8/10 &  ¤& * \\
			$\delta=100$ & 517 – 10/10 & 3560 – 8/10&  *& * \\
			$\delta=10^8$ & 497 – 10/10 & 2912 – 9/10 & * &*  \\
			\hline
		\end{tabular}
		\label{Table2}
		
	\end{table}
	Table~\ref{Table2} summarizes the performance of the various algorithms across different values of $\delta$, with each algorithm executed ten times. Several key observations can be drawn from this comparison:
	
	Algorithm~\ref{alg:Algorithm} consistently converges in all runs across different $\delta$ values, demonstrating robust performance with iteration counts decreasing as $\delta$ increases. In contrast, Algorithm \emph{relax} shows mixed results. It fails to converge for $\delta = 0.1$, $1$, and $10$. For higher values of $\delta$, it converges in fewer iterations but less consistently than Algorithm~\ref{alg:Algorithm}. The algorithms from \cite{shen2019distributed} and \cite{nejad2021enhancing} mostly fail to converge, especially for lower values of $\delta$. When they do converge, the solutions are not radial.
	
	For higher values of $\delta$ ($\delta = 10^8$), both Algorithm \emph{relax} and the referenced algorithms show improved feasibility ratios, but still not as high as Algorithm~\ref{alg:Algorithm}. This comparison indicates that Algorithm~\ref{alg:Algorithm} is superior in terms of consistency and iteration efficiency across different values of $\delta$. The use of a relaxation technique in Algorithm \emph{relax} and the methods from \cite{shen2019distributed} and \cite{nejad2021enhancing} appears less reliable than the variable substitution \eqref{substitution}, particularly for lower $\delta$ values.

	\subsubsection{Conclusion}
	\ym{The numerical results demonstrate that Algorithm~\ref{alg:Algorithm} effectively achieves consensus and optimal configurations in both standard and fault scenarios. While convergence to the global optimum depends on initialization, the algorithm consistently produces feasible and near-optimal solutions. Its resilience to faults and adaptability to changing conditions confirm its robustness.}
	
	\ym{Although the distributed algorithm requires more iterations than its centralized counterpart, it offers superior scalability and supports fully decentralized execution. Furthermore, it consistently outperforms relaxation-based distributed methods~\cite{shen2019distributed,nejad2021enhancing} in both feasibility and convergence reliability, owing to its exact projection step—formulated as a MWRAP which guarantees the radial solutions.}

	\subsection{Real Case Study: SRD Network}
	\subsubsection{Objective}
	\ym{The objective of this subsection is to evaluate the scalability and real-world applicability of Algorithm~\ref{alg:Algorithm} on a large-scale operational distribution network. Specifically, we aim to assess both the convergence performance and the algorithm’s robustness when applied to realistic grid topologies and load profiles provided by an actual utility.}
	
	\subsubsection{Experiment}
	\ym{We test Algorithm~\ref{alg:Algorithm} on a real distribution system operated by SRD, comprising $810$ buses, $8$ substations, and $1$ generator. The total load is $5.1267$ MW and $0.51$ MVAr, with a nominal voltage magnitude of $V_0=20$ kV. The network topology is illustrated in Figure~\ref{fig:SRD}. To evaluate convergence behavior, we vary the parameter $\delta$ and compute the average optimality gap across $10$ different random initializations $\boldsymbol{b}_0$. The gap is measured as:
		\begin{equation*}
			\text{Gap} = \left( \frac{\text{ADMM losses}}{\text{SRD losses}} - 1 \right) \times 100\%,
		\end{equation*}
		where the SRD losses represent reference values from the utility's solution.}
	
	\subsubsection{Results}
	\ym{Table~\ref{Table:SRD} summarizes the average gap, min-max gap, and average iteration count as a function of $\delta$. We observe a clear trade-off: increasing $\delta$ accelerates convergence but leads to a degradation in solution quality, with the optimality gap rising from $24.3\%$ to $54\%$ as $\delta$ increases from $1$ to $10^8$. Despite the model simplifications, the results demonstrate consistent convergence behavior across runs.}
	
	\begin{figure}[t!]
		\centering
		\includegraphics[width=0.9\columnwidth, height=4cm]{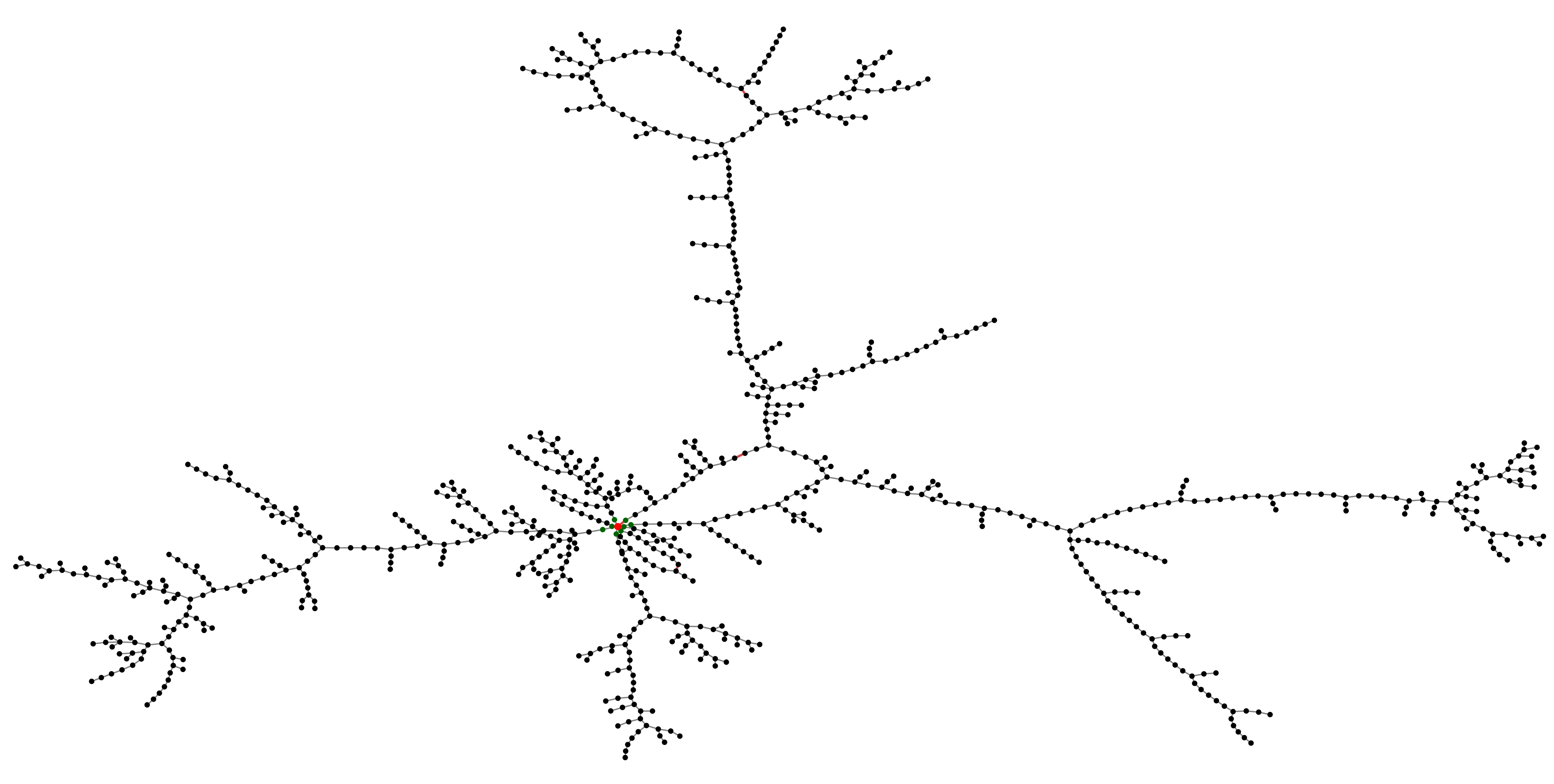}
		\caption{The SRD power grid. The big red node in the center represents the generator; green nodes indicate substations. Red lines show open switches in the optimal configuration.}
		\label{fig:SRD}
	\end{figure}
	
	\begin{table}[htb]
		\centering
		\caption{Performance of Algorithm~\ref{alg:Algorithm} on the SRD network for various $\delta$ values. Averages computed over 10 initializations.}
		\begin{tabular}{|l|c|c|c|}
			\hline
			$\delta$ & Avg. Gap (\%) & Min-Max Gap (\%) & Avg. Iterations \\ \hline
			$1$      & $24.32$        & $15.29$--$34.67$ & $19551$         \\ 
			$10$     & $34.85$        & $17.43$--$39.86$ & $15298$         \\ 
			$100$    & $39.40$        & $27.12$--$54.76$ & $12737$         \\ 
			$10^8$   & $54.02$        & $43.61$--$70.43$ & $9874$          \\ \hline
		\end{tabular}
		\label{Table:SRD}
	\end{table}
	
	\subsubsection{Resilience Evaluation}
	\ym{To assess real-time adaptability, a fault was injected at iteration $8500$ by deactivating a line, under the setting $\delta=10^8$ and $\boldsymbol{b}_0 = \boldsymbol{0}$. As illustrated in Figure~\ref{fig:SRDerror}, the algorithm responded to this disturbance by adapting the solution trajectory and converging in approximately $2 \times 10^4$ iterations. This showcases its capacity to reconfigure dynamically in large-scale settings.}
	
	\begin{figure}[t!]
		\centering
		\includegraphics[width=0.9\columnwidth, height=4cm]{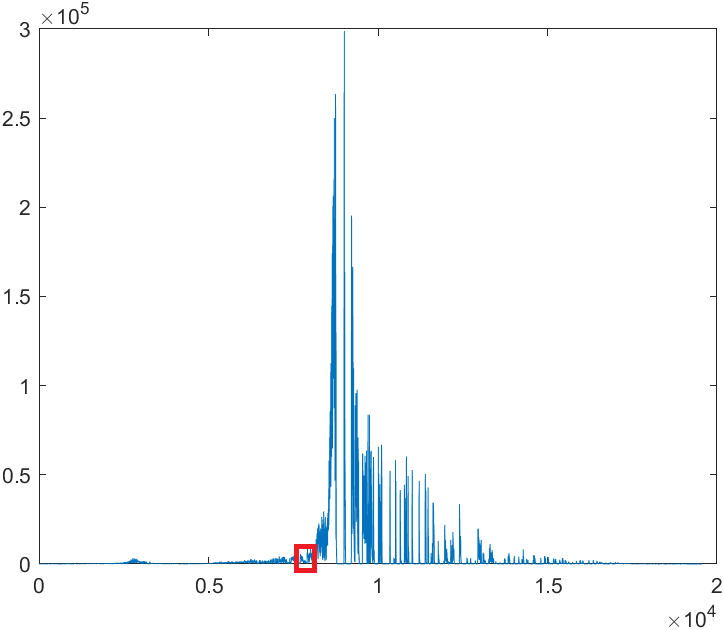}
		\caption{Evolution of $e_k$ with fault injection at iteration 8500, for $\delta = 10^8$.}
		\label{fig:SRDerror}
	\end{figure}
	
	\subsubsection{Conclusion}
	\ym{These experiments confirm the scalability, resilience, and practicality of Algorithm~\ref{alg:Algorithm} on real-world distribution networks. Despite inherent simplifications, the method achieves consistent feasibility and convergence, while adapting to dynamic changes without centralized control.}
	\section{Conclusion}
	
	This paper introduces a distributed algorithm designed to address the reconfiguration problem in PDNR, characterized by a nonlinear mixed-integer nature without straightforward solutions, especially in distributed environments. To tackle this, we leverage the Alternating Direction Method of Multipliers (ADMM), which breaks down the problem into two simpler sub-problems. Each agent is then tasked with solving a linearly constrained quadratic problem and an MWRAP with locally assigned weights. Extensive numerical experiments have thoroughly showcased the effectiveness and robustness of the proposed algorithm. These experiments provide strong evidence of its efficiency and capability to address real-world problems.

	\section*{Acknowledgments}
	
	This work was supported by the National Agency of Research (NAR) through the aLIENOR LabCom program ANR-19-LCV2-0006.
	
	\bibliographystyle{IEEEtran}
	\bibliography{biblio.bib}
	
\end{document}